\documentclass[11pt, DIV=8]{scrartcl}

\usepackage{amsmath, amsthm} % amssymb not needed with newtxmath
\usepackage{newtxtext,newtxmath}  % times font
\usepackage[scale=0.925]{sourcecodepro}  % fixed space font for code
\usepackage[final]{microtype}  % kerning, etc.
\usepackage{mleftright} % reduces extreme sizes of \left and \right
    \mleftright
\usepackage[
  paper=a4paper,
  left=1.5in,
  right=1.5in,
  top=1in,
  bottom=1.25in
]{geometry}

\usepackage{mathtools}

\usepackage[dvipsnames]{xcolor}
\usepackage{hyperref}
\hypersetup{
  breaklinks=true,
  colorlinks=true,
  allcolors=MidnightBlue
}

\usepackage{natbib}
\usepackage[nameinlink]{cleveref}

\usepackage{graphicx}
\usepackage{tikz}
    \usetikzlibrary{decorations.pathreplacing, positioning, cd}
\usepackage[font=footnotesize,labelfont=footnotesize]{caption}

\usepackage{algorithm, algpseudocode}

\mathtoolsset{showmanualtags}

\newcommand{\fapply}[2]{#1 \left( #2 \right)}

\DeclarePairedDelimiter{\norm}{\lVert}{\rVert}
\DeclarePairedDelimiterXPP\normal[1]{\mathcal{N}}{(}{)}{}{#1}
\newcommand{\kldivhat}[1]{\fapply{\widehat{D}_{\mathrm{KL}}}{#1}}
\newcommand{\fdivhatOp}{\hat{D}_{\mathrm{F}}}
\newcommand{\fdivhat}[1]{\fapply{\widehat{D}_{\mathrm{F}}}{#1}}
\newcommand{\absdet}[1]{\left| \det #1 \right|}
\newcommand{\drw}[2]{#1^{(#2)}}
\newcommand{\bigo}[1]{\fapply{\mathcal{O}}{#1}}
\newcommand{\ddx}{\frac{\partial}{\partial \x}}
\newcommand{\ddy}{\frac{\partial}{\partial \y}}
\newcommand{\Id}{\mathrm{I}}
\newcommand{\vecspan}[1]{\fapply{\operatorname{span}}{#1}}
\newcommand{\R}{\mathbb{R}}
\DeclarePairedDelimiterX\inner[2]{\langle}{\rangle}{#1, #2}
\newcommand{\dd}{\mathop{}\!\mathrm{d}}
\newcommand{\ones}{\mathrm{e}}
\newcommand{\vol}{\mathrm{vol}}

\newcommand{\diverge}[4]{#1_{#2}\left[#3 \, |\!| \, #4\right]}
\newcommand{\divergeD}[3]{\diverge{D}{#1}{#2}{#3}}
\newcommand{\kldiv}[2]{\divergeD{\textrm{KL}}{#1}{#2}}
\newcommand{\fdiv}[2]{\divergeD{\textrm{F}}{#1}{#2}}

\DeclareMathOperator{\covOp}{cov}
\newcommand{\cov}[1]{\fapply{\covOp}{#1}}

\DeclareMathOperator{\trOp}{tr}
\newcommand{\tr}[1]{\fapply{\trOp}{#1}}

\DeclareMathOperator{\diagOp}{diag}
\newcommand{\diag}[1]{\fapply{\diagOp}{#1}}

\DeclareMathOperator{\diagMatrixOp}{diag}
\newcommand{\diagMatrix}[1]{\fapply{\diagMatrixOp}{#1}}

\newcommand{\invdiag}[1]{\fapply{\diagMatrixOp^{-1}}{#1}}

\DeclareMathOperator{\varOp}{var}
\newcommand{\var}[1]{\fapply{\varOp}{#1}}

\DeclareMathOperator*{\argmin}{arg\,min}

\newcommand{\spdmean}[2]{#1 \,\#\, #2} % {\spdm(#1, #2)}

\newcommand{\innerproduct}[3]{\langle #2, #3 \rangle_{#1}}

\newcommand{\A}{\mathcal{A}}
\newcommand{\T}{\mathcal{T}}
\newcommand{\x}{x}
\newcommand{\y}{y}

\newcommand{\proofpart}[1]{\medskip\noindent\textbf{#1.}\quad}

\providecommand{\keywords}[1]
{\small	\textbf{\textit{Keywords}:} #1}

\theoremstyle{plain}
\newtheorem{definition}{Definition}[section]
\newtheorem{theorem}[definition]{Theorem}
\newtheorem{corollary}[definition]{Corollary}
\newtheorem{lemma}[definition]{Lemma}

\title{Preconditioning Hamiltonian Monte Carlo by minimizing Fisher Divergence}

\author{Adrian Seyboldt\thanks{PyMC Labs, \href{mailto:adrian.seyboldt@gmail.com}{\texttt{adrian.seyboldt@gmail.com}}}
\and
Eliot L. Carlson\thanks{Center for Computational Mathematics, Flatiron Institute, 
\href{mailto:ecarlson@flatironinstitute.org}{\texttt{ecarlson@flatironinstitute.org}}}
\and
Bob Carpenter\thanks{Center for Computational Mathematics, Flatiron Institute, \href{mailto:bcarpenter@flatironinstitute.org}{\texttt{bcarpenter@flatironinstitute.org}}}
}

\date{\today}

\begin{document}
\maketitle

\begin{abstract}
\noindent\small
Although Hamiltonian Monte Carlo (HMC) scales as $\bigo{d^{1/4}}$ in dimension, there is a large constant factor determined by the curvature of the target density. This constant factor can be reduced in most cases through preconditioning, the state of the art for which uses diagonal or dense penalized maximum likelihood estimation of (co)variance based on a sample of warmup draws. These estimates converge slowly in the diagonal case and scale poorly when expanded to the dense case. We propose a more effective estimator based on minimizing the sample Fisher divergence from a linearly transformed density to a standard normal distribution. We present this estimator in three forms, (a)~diagonal, (b)~dense, and (c)~low-rank plus diagonal.  Using a collection of 114 models from \texttt{posteriordb}, we demonstrate that the diagonal minimizer of Fisher divergence outperforms the industry-standard variance-based diagonal estimators used by Stan and PyMC by a median factor of 1.3.  The low-rank plus diagonal minimizer of the Fisher divergence outperforms Stan and PyMC's diagonal estimators by a median factor of 4.
\end{abstract}
\normalsize

\noindent
\keywords{Hamiltonian Monte Carlo, Fisher divergence, adaptation, preconditioning, Bayesian inference, Markov chain Monte Carlo}

\section{Introduction}

Hamiltonian Monte Carlo (HMC) is a Markov Chain Monte Carlo (MCMC) method that is widely used in Bayesian inference for sampling from differentiable high-dimensional posterior distributions \citep{betancourt_conceptual_2018, neal2011mcmc}. By simulating Hamiltonian trajectories in a phase space consisting of a position $\x \in \mathbb{R}^d$ augmented with a momentum $\rho \in \mathbb{R}^d$, HMC can explore high-dimensional parameter spaces more efficiently both in theory and in practice than gradient-free MCMC techniques such as random walk Metropolis and Gibbs sampling.  HMC's scalability makes it popular in probabilistic programming libraries based on automatic differentiation, like PyMC \citep{salvatier2016probabilistic}, Stan \citep{carpenter_stan_2017}, NumPyro \citep{phan2019composable} and Turing.jl \citep{ge2018turing}.  However, HMC struggles with the stiff posteriors induced by multiscale distributions, especially when the scaling is induced by correlation rather than through independent parameters and when the eigenvectors varies by position.

\subsection{Preconditioning Hamiltonian Monte Carlo}

A common approach to mitigating HMC's sensitivity to target geometry is to \emph{precondition} the original density using a parameter transform---this is what all the probabilistic programming languages do, for example.  After sampling, the preconditioned sample must be inverse transformed to recover a sample of the parameters of interest.

The most common approach to preconditioning is through a positive-definite \emph{mass matrix} $M$, whereby in the leapfrog algorithm, the auxiliary momentum variables $\rho$ are resampled from $\normal{0, M}$ instead of the default isotropic $\normal{0, \Id}$, and the momentum is transformed by $M^{-1}$ before the position updates to match \citep{neal2011mcmc}.  Following \citet{betancourt_conceptual_2018}, the HMC samplers of all of the probabilistic programming languages in wide use precondition with a mass matrix that targets the inverse of a regularized sample covariance estimate from warmup draws. The mass matrix is taken to be diagonal by default to avoid the high computational cost of using a dense mass matrix (cubic time to invert and factor for momentum resampling, and quadratic per leapfrog step in the Hamiltonian simulations). The diagonal approach aims to rescale the dimensions such that each has unit variance. 

This is often inadequate for highly correlated target geometries or those with varying curvature. In these cases, practitioners must manually reparameterize models to achieve efficient sampling. Even where possible, this process is time-consuming and requires expertise. In most cases, the optimal reparameterization depends on the data, meaning the same model requires different parameterizations when fitted to new data.  For example, depending on how informative the prior and data are, hierarchical models for varying effects, of which there are typically several in applied models, might require centered or non-centered parameterizations \citep{papaspiliopoulos2007general,betancourt2015hamiltonian}.

\subsection{Fisher HMC}

\emph{Fisher HMC}, introduced here, adapts a transformation of the target distribution with the objective of minimizing the Fisher divergence from the transformed distribution to a standard normal.  In this paper, we focus on affine transformations, which can be handled via a mass matrix; see \Cref{appendix:transformed-hmc} for a more general perspective.  As well as using warmup draws as in standard adaptation, Fisher HMC uses the scores of the warmup draws (i.e., the derivatives of the target log density evaluated at the warmup draws).  The scores require no additional computation because they are required for the Hamiltonian simulation step of HMC. We include specific algorithms for three classes of positive-definite mass matrices: diagonal, dense, and low rank plus diagonal.

\subsection{Summary of contributions}

Our primary contributions are as follows.
\begin{enumerate}
    \item We introduce a novel mass-matrix adaptation objective---minimizing
     the Fisher divergence from the preconditioned target density to a standard normal.%
     \footnote{ \Cref{appendix:transformed-hmc} provides the geometric motivation for Fisher divergence.}
    \item We provide an analysis of the resulting condition numbers for multivariate normal targets.
    \item We introduce an efficient scheme for estimating mass matrices using warmup draws and their scores and apply it to diagonal, dense, and low-rank plus diagonal mass matrices.
    \item We experimentally demonstrate that \texttt{nutpie}'s Fisher HMC outperforms the No-U-Turn Sampler's covariance-based mass matrix estimates on a wide range of model classes.
\end{enumerate}

\section{Theoretical motivation for Fisher HMC}

HMC is a gradient-based method, meaning that the algorithm computes the partial derivatives of the log posterior density, or \emph{scores}.  Given a target density $p(\x)$, its score function is defined by 
$$
S(\x) = \ddx \log p(\x).
$$
While these scores contain useful information about the target density, the computationally tractable preconditioning methods in wide use \citep{carpenter_stan_2017, bales2019} ignore them.  We discuss why and how this information can be valuable for constructing a preconditioner for HMC sampling.

\subsection{An example with a normal target density}
\label{sec: motivation_normal}

To illustrate how valuable scores are, consider a univariate normal target density $p(\x) = \normal{\x \mid \mu, \sigma^2}$ for $\x, \mu \in \mathbb{R}$ and $\sigma \in (0, \infty)$.  The log density for a univariate normal is
$$
\log p(x) \propto -\frac{1}{2} \left( \frac{x - \mu}{\sigma} \right)^2,
$$
and the score function is
$$
S(\x) = -\frac{x - \mu}{\sigma^2}.
$$
Suppose we have a pair of distinct samples, $\drw{\x}{1},\drw{\x}{2} \sim \normal{\mu, \sigma^2}$ along with their corresponding scores $\drw{\alpha}{1} = S(\drw{x}{1})$ and $\drw{\alpha}{2} = S(\drw{\x}{2})$.  A little algebra shows that two draws are sufficient to identify the true parameters as
\begin{equation}\label{eq:diag-solution}
\mu = \overline{\x} + \sigma^2 \, \overline{\alpha}
\qquad
\sigma^2 = \sqrt{\var{x} / \var{\alpha}}.
\end{equation}
where $\overline{\x}, \var{\x}$ and $\overline{\alpha}, \var{\alpha}$ are the means and variances of the vectors $x = [\drw{x}{1} \ \drw{x}{2}]$ and $\alpha = [\drw{\alpha}{1} \ \drw{\alpha}{2}]$, respectively.%
\footnote{The answer is the same with variance taken to be the maximum likelihood estimate or unbiased estimate.}

Estimators based solely on samples from a distribution are always limited by the Cramér–Rao bound.  In contrast, estimators that incorporate sample score information can bypass this bound and converge at a faster rate (e.g., the two steps shown above for the univariate normal).  Later, we will see that for general multivariate distributions, an exact covariance estimate requires a number of draws equal to one plus the dimension of the distribution.

\subsection{Fisher divergence}
\label{sec: fisher_divergence}

Let $p_\T(\x)$ be a density on the \emph{target space} $\T$ from which we want to sample. We apply a change of variables via the diffeomorphism \[F\colon \A \to \T\] that transforms samples from an auxiliary \emph{adapted space} $\A$ onto $\T$\footnote{Since $F$ is a diffeomorphism, the direction of the map does not change any results. We define it from the adapted space to the target space, in line with the literature on normalizing flows.}.
Here $\T$ and $\A$ denote the same ambient space $\R^d$, but we maintain distinct notation to emphasize their different roles.

By the change-of-variables formula, the induced density on $\A$ is
\[
p_\A(\y; F) = \fapply{p_\T}{F(\y)} \, \absdet{J_F(\y)},
\]
where $J_F$ is the Jacobian of $F$ and $\absdet{J_F(\y)}$ its absolute determinant. When the choice of $F$ is clear from context, we omit it from the notation and write $p_\A(\y)$ for $p_\A(\y; F)$.

Given an affine transformation $F(\y) = A\y + \mu$ with $(AA^\top)^{-1} = M$, sampling from $p_\A$ is equivalent to sampling from $p_\T$ with mass matrix $M$ \citep{neal2011mcmc}.\footnote{The choice of matrix square root here does not matter.}

We would like to choose a preconditioner $F^*$ from a rich enough family $\mathcal{F}$ of transformations that the transformed density $p_\A$ will be close a standard normal distribution.  Any measure of divergence $D$ from $p_\A$ to $\normal{0, \Id}$ can be used to define an optimal transform $F^*$ drawn from the family $\mathcal{F}$ of transforms,
\begin{equation}
    \label{eq:divergence}
    F^* = \argmin_{F\in \mathcal{F}} \ \diverge{D}{}{p_\A(\cdot\,; F)}{\normal{0, I}}.
\end{equation}
In practice, we will use a Monte Carlo estimate of the divergence based on warmup samples from the target distribution $p_\T$. This objective is reminiscent of variational inference, where one minimizes $\diverge{D}{}{\normal{0, \Id}}{p_\A}$ \citep{wainwright2008graphical}. Here, however, we take the divergence in the opposite direction, $\diverge{D}{}{p_\A}{\normal{0, \Id}}$.

The most widely applied divergence is the Kullback-Leibler (KL)
divergence.  The KL divergence from $p$ to $q$ measures the cost of using density $q$ to encode outcomes from distribution $p$:
\[
\kldiv{p}{q}
= \int \! p(\y) \, \log \left(\frac{p(\y)}{q(\y)}\right) \, \textrm{d}\y.
\]
The divergence is always non-negative, equal to zero if and only if the arguments match, but it is not symmetric.

Minimizing the KL divergence leads to the choice of inverse mass matrix equal to the covariance of the draws, $M^{-1} = \cov{x}$ or, if restricted to be diagonal, $M^{-1} = \diagMatrix{\var{x}}$; see \Cref{proof:kl-divergence-min} for a proof. \citet{tran_tuning_2024} minimized the divergence from the distribution of the scores $p_\T$ to a standard normal distribution, which results in a mass matrix estimate equal to the variance of the scores, $M = \var{\alpha}$.

Minimizing KL divergence for mass matrix estimation using either scores or draws alone is suboptimal even for Gaussian targets, because it does not penalize both large and small eigenvalues of the covariance matrix equally; see \Cref{condition_number} for more details.  Practically speaking, this is because it concentrates on minimizing divergence to either the score covariance or draw covariance, but does not use the information from both simultaneously. We propose using Fisher divergence instead of KL divergence.  Fisher divergence uses information from both the draws and scores to condition both large and small eigenvalues.  The definition of Fisher divergence is as follows; \Cref{appendix:transformed-hmc} contains a more general geometric definition.

\begin{definition}[Fisher divergence]
Let $p$ and $q$ be two differentiable probability densities supported in $\mathbb{R}^d$, and let $\norm{v}_{G^{-1}} = (v^TG^{-1} v)^{1/2}$ be a norm, where $G$ is a symmetric positive definite matrix. The Fisher divergence from $p$ to $q$ is
\[
\fdiv{p}{q}
= \int \! p(\y) \, \norm*{\ddy\log\left(\frac{q(\y)}{p(\y)}\right)}_{G^{-1}}^2 \, \textrm{d} \y.
\]
\end{definition}
\noindent
Since scores transform contravariantly, we use the inverse of $G$ in the norm.

While we don't have an inner product on our target space $\T$ yet, we do have a natural norm on the adapted space $\A = F^{-1}(\T)$, because the standard normal $q$ is defined with respect to an inner product. This makes the standard Euclidean norm on the adapted space a natural choice. Suppose our target density is $p_\T(\x)$. Our divergence transforms back to the original target space $\T$ as
\begin{align*}
    \fdiv{p_\A}{q}
    &= \int_\A p_\A(\y) \norm*{\ddy\log\left(\frac{q(\y)}{p_\A(\y)}\right)}^2 \textrm{d}\y \\
    &= \int_\T p_\T(\x) \norm*{\ddx\log\left(\frac{q(\x)}{p_\T(\x)}\right)}^2_{\left(J_F(F^{-1}(\x)) J_F(F^{-1}(\x))^\top\right)^{-1}} \textrm{d}\x \\
    &= \int_\T p_\T(\x) \norm*{\ddx\log\left(\frac{q(\x)}{p_\T(\x)}\right)}^2_{M} \textrm{d} \x,
\end{align*}
where $J_F(y)$ is the Jacobian of the transformation $F\colon \A \to \T$. 

In practice, the Fisher divergence is only estimable from a finite number of posterior draws $\drw{\x}{i}$ and their scores, $\drw{\alpha}{i} = S(\drw{\x}{i})$. With this finite sample of draws, we use the Monte Carlo estimate $\widehat{D}$ of $\fdiv{p_\A}{\normal{0, \Id}}$,
\begin{equation*}
\widehat{D} = \frac{1}{N} \sum_{i=1}^N \, \norm*{\,\drw{\beta}{i} + \drw{\y}{i}}^2,
\end{equation*}
where $\drw{\y}{i}$ and $\drw{\beta}{i}$ are the samples and scores in the adapted space $\A$,
$$
    \drw{\y}{i} = F^{-1}\big(\drw{\x}{i}\big) 
    \qquad \qquad
    \drw{\beta}{i} = J_F^\top\big(\drw{\y}{i}\big) \, \alpha^{(i)} + \ddy \log \, \absdet{J_F\big(\drw{\y}{i}\big)}.
$$

\subsection{Condition number for Gaussian target densities}
\label{condition_number}

Although most of the target distributions from which we would like to sample are not strictly Gaussian, posterior distributions in Bayesian inference are often log-concave, particularly when working with large data sets, due to the Bernstein–von Mises theorem. Most of the target densities in \texttt{posteriordb} are log concave; those that aren't have positive-definite Hessians throughout most of their posterior probability mass \citep{magnusson_posteriordb_2024}. These densities are still relatively easy to analyze, so we will focus on them in this section.

HMC's efficiency is limited by the ratio of the largest to the smallest eigenvalues of the Hessian of the negative log density of the target, i.e., the condition number.  Small scales constrain the step size to be small in order to remain stable, whereas large scales require many steps to complete an orbit through the target space. The leapfrog algorithm is stable only when the step size is smaller than $2 / \sqrt{\max(\lambda)}$, where $\lambda$ is the spectrum (i.e., the vector of eigenvalues) of the Hessian of $-\log p(\x)$ evaluated at the current position $\x$ \citep{leimkuhler2004simulating}.  The condition number of a matrix is defined to be the ratio of its largest to its smallest eigenvalue, $\kappa = \max(\lambda) / \min(\lambda)$.  The larger the condition number, the more steps are required to complete an orbit in the largest length-scale component of the target density.

\begin{figure}[t!]
    \centering
    \includegraphics[width=0.8\textwidth]{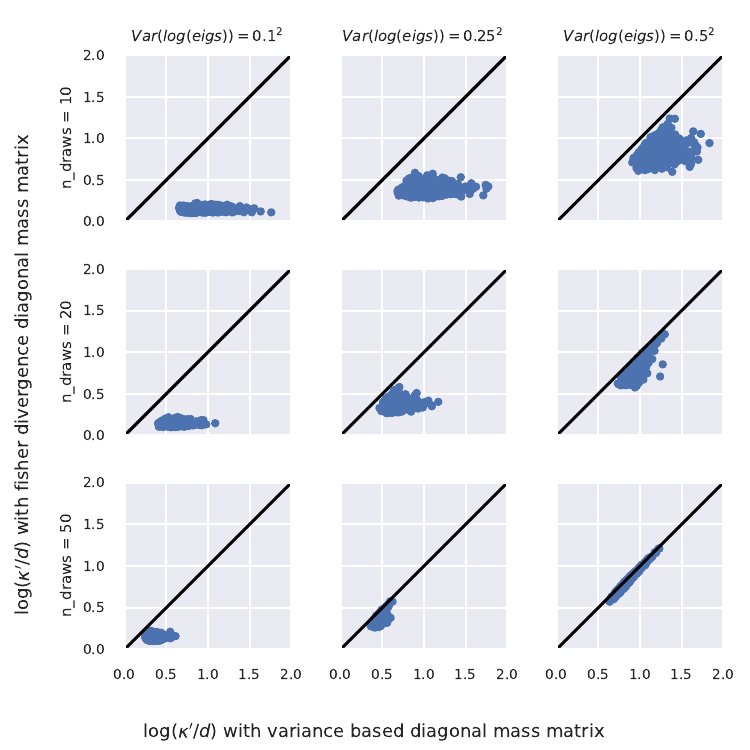}
    \caption{\textit{
            Simulated log condition numbers $\kappa$ for Fisher- and covariance-based diagonal preconditioners. For each experiment, we generate 1000 random covariance spectra ($\Sigma = D^{1/2}U\diag{\lambda^2}U^TD^{1/2}$, $U\sim \text{Uniform}((0, 200))$, $\log(\lambda_i)\sim \normal{0, \sigma^2}$, $\log(D_{i,i})\sim \normal{0, 2^2}$) and simulate sets of [10, 20, 50] i.i.d. draws from $\normal{0, \Sigma}$. We then calculate the condition number $\kappa'$ of the adapted posterior with variance and Fisher-divergence based diagonal preconditioning.
    }}
    \label{fig:condition_number}
\end{figure}

For multivariate Gaussian target densities we can be more precise about the efficiency. Suppose we have a centered multivariate Gaussian target density $\normal{\x \mid \mu, \Sigma}$ with eigenvalues $\lambda$ for the Hessian of $-\log \normal{\x \mid \mu, \Sigma}$. \citet{langmore_condition_2020} showed that the number of leapfrog steps per iteration required for efficient Hamiltonian Monte Carlo is governed by a condition-number-like quantity defined by%
\footnote{Our definition matches that of \citet{langmore_condition_2020}, despite the difference in exponents. We use an exponent of 2 because we take the eigenvalues of $\Sigma = AA^\top$, whereas they use an exponent of 4 in their definition (3.1) because they take the eigenvalues of $A$.}
\begin{equation}\label{eq:kappa}
\kappa' 
= \left(
\sum_{i=1}^d \left(\frac{\max(\lambda)}{\lambda_i}\right)^2
\right)^{1/4},
\end{equation}
where $\lambda$ is the vector of eigenvalues of the Hessian of the negative log density, $-\frac{\partial^2}{\partial\x^2} \log p(\x)$. In the following, we analyze how different divergence choices penalize the eigenspectrum of multivariate normal target distributions.

Using the KL divergence on draws only conditions large eigenvalues, as can be seen from the the KL divergence from a centered multivariate Gaussian with covariance $\Sigma$ to a standard Gaussian,
\begin{equation}
\kldiv{\normal{0, \Sigma}}{\normal{0, \textrm{I}}}
= \frac{1}{2} \sum_{i=1}^d\left(\lambda_i - \log(\lambda_i) - 1\right).
\end{equation}
If instead we focus entirely on the scores, as \citet{tran_tuning_2024} suggest, it conditions small eigenvalues,
\begin{align*}
\kldiv{S(\normal{0, \Sigma})}{\normal{0, \Id}}
&= \kldiv{\normal{0, \Sigma^{-1}}}{\normal{0, \Id}} \\
&= \frac{1}{2} \sum_{i=1}^d
  \left( \lambda_i^{-1} - \log(\lambda_i^{-1}) - 1 \right).
\end{align*}
The Fisher divergence penalizes large and small values equally, leading to more favorable values of $\kappa$ and $\kappa'$.  The Fisher divergence from a centered multivariate Gaussian with spectrum $\lambda$ to a standard Gaussian is
\begin{align*}
\label{eq:fdiv-gaussian}
    \fdiv{\normal{0, \Sigma}}{\normal{0, \textrm{I}}} 
    &= \int \x^\top(I - \Sigma^{-1})^2 \x \normal{\x \mid 0, \Sigma}
            \, \textrm{d} \x \\
    &= \textrm{tr}\left((\textrm{I} - \Sigma^{-1})^2 \, \Sigma\right) \\
    &= \sum_{i=1}^{d} \left(\lambda_i + \lambda_i^{-1} - 2\right).
\end{align*}
We compare simulated $\kappa'$ for various Gaussian targets preconditioned with \texttt{nutpie}'s diagonal estimate and with the draws-based one in \Cref{fig:condition_number}.

\subsection{Affine transformations}

This section presents results for three families of affine transformations, (1) diagonal, (2) diagonal plus low rank, and (3) dense.  This section shows how warmup draws and their scores can be used to analytically derive the mass matrix that minimizes estimated Fisher divergence to a standard normal within the chosen family.  Table~\ref{tab:bigo} shows the computational costs in terms of time and memory for the three approaches.

\begin{table}[t]
\[
\begin{array}{r|cccc}
\textit{mass matrix} & \textit{refresh} & \textit{leapfrogs} & \textit{memory} & \textit{estimation} \\
\hline
\text{dense} & \bigo{d^3} & \bigo{d^2} & \bigo{d^2} & \bigo{k\,d^2} \\
\text{low-rank + diagonal} & \bigo{r\, d} & \bigo{r \, d} & \bigo{k \, d  + k^2} & \bigo{k^3d} \\
\text{diagonal} & \bigo{d} & \bigo{d} & \bigo{d} & \bigo{d}^*\\
\end{array}
\]
\caption{\textit{Computational cost for three families of mass matrices in dimension $d$, low-rank approximation size $r \leq d$ and window size $k$.  Refreshes are required in order to refresh momentum at the start of each HMC trajectory. The estimation cost applies only periodically during warmup, when a new mass matrix estimate is needed. The estimation cost of the diagonal case is $\bigo{d}$ amortized over the updates, a single estimation would cost $\bigo{k\,d}$.}}\label{tab:bigo}
\end{table}

Each family can be characterized as a set of affine diffeomorphisms $F: \A \to \T$.  In the most general, dense case, we have affine transforms $F_{\Sigma, \mu}: \mathbb{R}^d \rightarrow \mathbb{R}^d$, defined by
$$
F_{\Sigma, \mu}(y) = A \, y + \mu,
$$
where $\mu \in \mathbb{R}^d$ and $\Sigma \in \mathbb{R}^{d \times d}$ is symmetric and positive definite with the Cholesky factorization $\Sigma = A \, A^\top$. Because HMC is translation invariant, using $F_{\Sigma, \mu}$ to precondition the density is equivalent to using the mass matrix $M = \Sigma^{-1}$ \citep{neal2011mcmc}.  Although the location (i.e., translation) vector $\mu$ does not affect the conditioning of HMC, it can be helpful for initialization if known approximately in advance.

\subsubsection{Diagonal transformations}
\label{sec: diag_mass_matrix}

For a diagonal transform, $\Sigma$ is restricted to the diagonal form $\Sigma = \diag{\sigma^2}$, where $\sigma \in (0, \infty)^d$ is a vector of strictly positive scales and $\sigma^2$ is evaluated elementwise.  This yields the transform
$$
F_{\diag{\sigma^2}, \mu}(y) = \sigma \odot \y + \mu,
$$
where $\odot$ is elementwise product. Given draws $\drw{x}{i}$ from the target density $p_\T$ and their corresponding scores $\drw{\alpha}{i}$ for $i \in \{ 1, \ldots, n \}$, the estimate of the Fisher divergence $\fdiv{p_\A}{q}$ is
\begin{equation}
\label{eq:fisher_divergence_diag}
   \widehat{D} = \frac{1}{n} \sum_{i=1}^n \norm{\sigma \odot \drw{\alpha}{i} + \sigma^{-1} \odot (\drw{\x}{i} - \mu)}^2.
\end{equation}

\begin{theorem}[Diagonal divergence minimization]
The estimated divergence is minimized when $\mu, \sigma^2 = \mu^*, {\sigma^2}^*$, where
$$
\mu^* = \overline{\x} + {\sigma^2}^* \odot \overline{\alpha}
\qquad \qquad
{\sigma^2}^* = \diag{\invdiag{\cov{x}}} \,\#\ \left( \diag{\invdiag{\cov{\alpha}}}\right)^{-1},
$$
where $\diagMatrixOp^{-1}:\mathbb{R}^{d \times d} \rightarrow \mathbb{R}^d$ extracts the diagonal of a matrix as a vector and $\#$ is the geometric mean operator in the affine-invariant Riemannian metric over symmetric, positive-definite matrices.
\end{theorem}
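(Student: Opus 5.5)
The plan is to use the fact that the objective in \eqref{eq:fisher_divergence_diag} separates across coordinates. For each $j$, the summand depends only on $(\mu_j,\sigma_j)$, so we can minimize
\[
\widehat{D}_j(\mu_j,\sigma_j) = \frac{1}{n}\sum_{i=1}^n \bigl(\sigma_j \alpha^{(i)}_j + \sigma_j^{-1}(x^{(i)}_j - \mu_j)\bigr)^2
\]
over $\mu_j\in\R$ and $\sigma_j>0$ independently. This is exactly the one-dimensional problem of \Cref{sec: motivation_normal}, now with $n$ draws instead of two. The answer in \eqref{eq:diag-solution} should reappear with $\var{x_j}$ and $\var{\alpha_j}$ being the sample variances. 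These are precisely the diagonal entries of $\cov{x}$ and $\cov{\alpha}$.

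First I minimize over $\mu_j$ with $\sigma_j$ held fixed. Multiplying through by $\sigma_j^2$, the summand equals $\sigma_j^{-2}(\sigma_j^2\alpha^{(i)}_j + x^{(i)}_j - \mu_j)^2$. This is a least-squares problem in $\mu_j$, so its minimizer is the mean, $\mu_j = \overline{x}_j + \sigma_j^2\overline{\alpha}_j$. That gives the stated formula for $\mu^*$, once $\sigma^*$ has been found.

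Next I substitute this $\mu_j$ back into $\widehat{D}_j$. The centred residual is $\sigma_j(\alpha^{(i)}_j-\overline{\alpha}_j) + \sigma_j^{-1}(x^{(i)}_j-\overline{x}_j)$. Expanding the square and averaging gives the profile objective
\[
g(s) = s\,\var{\alpha_j} + s^{-1}\var{x_j} + 2\,\cov{x_j,\alpha_j}, \qquad s=\sigma_j^2>0 .
\]
The cross term does not depend on $s$. The function $s\,a + b/s$ with $a,b>0$ is strictly convex on $(0,\infty)$. Its unique minimizer is $s^* = \sqrt{b/a} = \sqrt{\var{x_j}/\var{\alpha_j}}$.

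Finally I identify this formula with the matrix expression in the statement. Both $\diag{\invdiag{\cov{x}}}$ and $\diag{\invdiag{\cov{\alpha}}}^{-1}$ are diagonal, so they commute. For commuting positive-definite matrices $A$ and $B$, the affine-invariant geometric mean $A\#B = A^{1/2}(A^{-1/2}BA^{-1/2})^{1/2}A^{1/2}$ reduces to $(AB)^{1/2}$. Here that is the elementwise value $\sqrt{\var{x_j}\cdot \var{\alpha_j}^{-1}}$, which matches $s^*$. Joint optimality follows because the profile minimization is exact: for every $\sigma$, $\widehat D(\mu^*(\sigma),\sigma)\le\widehat D(\mu,\sigma)$, so minimizing the profile objective gives the global minimizer.

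The main obstacle is nondegeneracy. The argument needs $\var{x_j}>0$ and $\var{\alpha_j}>0$, which requires at least two distinct draws with distinct scores in each coordinate. Without this, $g$ has no interior minimizer and the geometric mean is undefined, so I would state the result under this assumption. A second small point is checking that the cross term really drops out of the $s$-dependence; the expansion above confirms that it does.
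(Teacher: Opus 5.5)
Your proposal is correct and follows essentially the same route as the paper's proof in \Cref{appendix:diag-proof}: you solve for $\mu$ in closed form, substitute back to get the decoupled per-coordinate objective $s\,\var{\alpha_j} + s^{-1}\var{x_j} + 2\cov{x_j, \alpha_j}$, minimize it at $s^* = \sqrt{\var{x_j}/\var{\alpha_j}}$, and identify this with the diagonal geometric mean. Your additions tighten points the paper leaves implicit without changing the approach: strict convexity, the profile argument showing a global minimum rather than just a stationary point, and the requirement $\var{x_j}, \var{\alpha_j} > 0$.
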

\begin{proof}
See Corollary~\ref{thm:diagonal-geometric-mean} for a derivation of the geometric mean operator, and \Cref{appendix:diag-proof} for a proof.
\end{proof}
In words, the estimated minimizer $\sigma^*$ is derived by taking the geometric mean of two diagonal matrices: the diagonal of the covariance matrix of draws and the diagonal of the covariance matrix of their inverse scores.  In the normal case, the estimated mean and scale recover the analytical result from the motivating example.  

The geometric mean of the diagonal matrices reduces to
$$
{\sigma^2}^* = \sqrt{\invdiag{\cov{x}} \, / \, \invdiag{\cov{\alpha}}},
$$
where the division and square root are read elementwise; see Theorem~\ref{thm:diagonal-geometric-mean}.  The transformation corresponds to the inverse mass matrix
\begin{equation}\label{eq:diag_solution}
M^{-1} = \diag{{\sigma^2}^*}.
\end{equation}

The diagonal approximation is popular due to its reduced computational cost compared to dense or even low rank plus diagonal.  It is the default approach used in \texttt{nutpie} and all of the widely-used HMC-based samplers. Both the mass-matrix solves required during adaptation and the mass-matrix times momentum multiplies required during the leapfrog steps are $\bigo{d}$ in time and memory;  see Figure~\ref{tab:bigo} for a comparison to the dense and low-rank plus diagonal families.  

For adaptation during the warmup phase, mass matrix estimation requires $\bigo{n \, d}$ operations in $d$ dimensions and $n$ warmup draws. Because the solution only depends on the variances of the sequences of samples and scores, these can be calculated online in $\bigo{d}$ memory with stable arithmetic using Welford's algorithm \citep{welford1962};  see \Cref{appendix:welford} for more details.  

\subsubsection{Dense transformations}
\label{sec: dense_case}
The affine transform $F_{\Sigma, \mu}(y) = A \, y + \mu$ for positive-definite $\Sigma$ corresponds to the mass matrix $M = (AA^\top )^{-1}.$ The estimate of the Fisher divergence $\fdiv{p_\A}{q}$ is
\begin{equation}
    \label{divergence:dense_case}
    \widehat{D}  = \frac{1}{N} \sum_{i=1}^N \norm{A^\top \drw{\alpha}{i} + A^{-1} (\drw{\x}{i} - \mu)}^2.
\end{equation}

\begin{theorem}[Dense minimization]
The estimated Fisher divergence $\widehat{D}$ is minimized over $A$ when 
\begin{equation}
\label{eq:dense_equation}
    A \, A^\top \! \cov{\alpha} \, A \, A^\top = \cov{x}.
\end{equation}
\end{theorem}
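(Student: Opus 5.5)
The plan is to reparametrize the objective in terms of $\Sigma = AA^\top$. The loss should then become a convex function of $\Sigma$ after $\mu$ has been eliminated, and the stationarity condition can be read off directly. First I expand the square in \eqref{divergence:dense_case}:
\[
\norm{A^\top \alpha + A^{-1}(x-\mu)}^2 = \alpha^\top \Sigma \alpha + 2\,\alpha^\top (x-\mu) + (x-\mu)^\top \Sigma^{-1}(x-\mu).
\]
The cross term does not depend on $A$, because $AA^{-1}=\Id$. So $\widehat{D}$ depends on $A$ only through $\Sigma$, which also explains why the choice of square root is irrelevant. Averaging over draws gives
\[
\widehat{D}(\Sigma,\mu) = \tr{\Sigma\,\overline{\alpha\alpha^\top}} + 2\,\overline{\alpha^\top(x-\mu)} + \tr{\Sigma^{-1}\,\overline{(x-\mu)(x-\mu)^\top}} .
\]

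Second, I minimize over $\mu$ for fixed $\Sigma$. The objective is a convex quadratic in $\mu$ with gradient $-2\overline{\alpha} - 2\Sigma^{-1}(\overline{x}-\mu)$, so the minimizer is $\mu^* = \overline{x} + \Sigma\overline{\alpha}$, matching the diagonal theorem. Substituting this back, I decompose the second moments into centered parts. The terms in $\overline{\alpha}$ and $\overline{x}$ should combine so that the profile objective becomes
\[
\widehat{D}(\Sigma) = \tr{\Sigma\,\cov{\alpha}} + \tr{\Sigma^{-1}\cov{x}} + \text{const}.
\]
To check this, write $x-\mu^* = (x-\overline{x}) - \Sigma\overline{\alpha}$. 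This gives $\overline{(x-\mu^*)(x-\mu^*)^\top} = \cov{x} + \Sigma\overline{\alpha}\,\overline{\alpha}^\top\Sigma$, using the fact that the centered parts average to zero. The cross term contributes $2\,\overline{\alpha^\top(x-\overline x)} - 2\,\overline{\alpha}^\top\Sigma\overline{\alpha}$. Combining with $\tr{\Sigma\,\overline{\alpha\alpha^\top}} = \tr{\Sigma\cov{\alpha}} + \overline{\alpha}^\top\Sigma\overline{\alpha}$, the $\overline{\alpha}^\top\Sigma\overline{\alpha}$ terms cancel: $+1-2+1=0$. This bookkeeping is the main place an error could creep in, so I would do it carefully.

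Third, I differentiate over symmetric positive definite $\Sigma$. Using $\dd\,\tr{\Sigma^{-1}C} = -\tr{\Sigma^{-1}\dd\Sigma\,\Sigma^{-1}C}$, the first-order condition is
\[
\cov{\alpha} - \Sigma^{-1}\cov{x}\,\Sigma^{-1} = 0,
\]
which is equivalent to $\Sigma\cov{\alpha}\Sigma = \cov{x}$, i.e.\ \eqref{eq:dense_equation}.

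Fourth, I show that this stationary point is the global minimizer. The term $\tr{\Sigma C}$ is linear in $\Sigma$, and $\Sigma \mapsto \tr{\Sigma^{-1}C}$ is convex on the positive definite cone for $C \succeq 0$, because matrix inversion is operator convex. Hence any stationary point is a global minimum. The main obstacle is existence and uniqueness: this requires $\cov{x}$ and $\cov{\alpha}$ to be positive definite, which holds with at least $d+1$ generic draws. Under that assumption, the Riccati-type equation $\Sigma B \Sigma = C$ has the unique positive definite solution $\Sigma = B^{-1}\# C$, the geometric mean of $\cov{\alpha}^{-1}$ and $\cov{x}$. I would exhibit it explicitly as $\Sigma = B^{-1/2}(B^{1/2}CB^{1/2})^{1/2}B^{-1/2}$ and verify uniqueness by conjugating with $B^{1/2}$, which reduces the equation to uniqueness of the positive definite square root. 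If either covariance is singular, the infimum may not be attained, and the statement should then be read as a characterization of stationary points.
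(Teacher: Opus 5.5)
Your proof is correct, but it takes a different route from the paper's.

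\emph{The paper's route.} Its proof never passes to $\Sigma = AA^\top$; the dependence on $AA^\top$ appears only as a separate lemma. It writes $\widehat{D} = \frac{1}{n}\norm{A^\top S + A^{-1}(X - \mu\ones^\top)}_F^2$ and solves the $\mu$-equation to get $\mu = \overline{x} + M^{-1}\overline{\alpha}$. It then takes the differential in an unconstrained $\dd A$, including the term $-A^{-1}\dd A\,A^{-1}$. Using $\tilde X\ones = 0$, it collapses the stationarity equation to $\tilde S\tilde S^\top = M\tilde X\tilde X^\top M$. That is your condition after multiplying on both sides by $M^{-1} = \Sigma$. This route is mechanical and needs neither the reduction to $\Sigma$ nor the profile bookkeeping. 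However, it delivers only a first-order, necessary condition: it never shows that the stationary point is a minimum.

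\emph{Your route.} You reduce to the profiled objective $\tr{\Sigma\cov{\alpha}} + \tr{\Sigma^{-1}\cov{x}}$ plus a constant, and your cancellation of the $\overline{\alpha}^\top\Sigma\overline{\alpha}$ terms checks out. This buys strictly more than the paper proves. Convexity on the positive-definite cone makes the condition sufficient for a global minimum. Conjugating by $\cov{\alpha}^{1/2}$ gives existence and uniqueness of the solution $\spdmean{\cov{\alpha}^{-1}}{\cov{x}}$, which the paper only states in the main text.

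One small correction to your last sentence. Your convexity argument uses only $\cov{x}\succeq 0$, so even with singular covariances every positive-definite solution of the equation is still a global minimizer. What can fail in the singular case is existence or uniqueness of such a solution, not its interpretation as a minimum. You therefore do not need to weaken the statement to a mere characterization of stationary points.
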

\begin{proof}
See \Cref{proof:fisher_divergence_min}.
\end{proof}
Because $\fdiv{p_\A}{\normal{0, \Id}}$ only depends on $AA^\top$ and $\mu$ (see \Cref{proof:symmetric-matrix-dependence}), $A$ can be restricted to be symmetric positive definite without loss of generality. If the two covariance matrices are full rank, then the minimum is unique and achieved when

\begin{equation*}
M^{-1} = \spdmean{\cov{x}}{\cov{\alpha}^{-1}},
\end{equation*}
where $\spdmean{A}{B}$ is the geometric mean in the affine-invariant Riemannian metric (AIRM) of symmetric, positive-definite matrices.  

\begin{theorem}
Let $A$ be a solution to \Cref{eq:dense_equation} for $k$ samples and $k$ scores with a target distribution $\normal{\mu, \Sigma}$. Then there is a $k - 1$ dimensional subspace of $\vecspan{\drw{\x}{i} - \overline{\drw{\x}{i}}} \cup \vecspan{\drw{\alpha}{i} - \overline{\drw{\alpha}{i}}}$ in which $\Sigma$ and the covariance of $p_{\!\A}$ match precisely.

It follows that if $k > d + 1$, then $AA^\top = \Sigma$ and we recover the target distribution exactly.
\end{theorem}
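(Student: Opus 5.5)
Set $\Sigma_A = AA^\top$ and let $X\in\R^{d\times k}$, $G\in\R^{d\times k}$ be the centered draws $\drw{x}{i}-\overline{x}$ and centered scores $\drw{\alpha}{i}-\overline{\alpha}$ stacked as columns. The proof rests on the fact that, for a Gaussian target $\normal{\mu,\Sigma}$, the score is affine, $\drw{\alpha}{i} = -\Sigma^{-1}(\drw{x}{i}-\mu)$. After centering this gives the exact linear identity $G = -\Sigma^{-1}X$. Hence
\[
\cov{\alpha} = \Sigma^{-1}\cov{x}\,\Sigma^{-1}, \qquad \cov{x} = \tfrac{1}{k}XX^\top .
\]
Substituting into \Cref{eq:dense_equation} gives
\[
\Sigma_A\Sigma^{-1}XX^\top\Sigma^{-1}\Sigma_A = XX^\top .
\]
Define $W = \Sigma^{-1}\Sigma_A$, which is the Gram ratio between the adapted covariance and the true one. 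The equation then reads $W^\top C W = C$ with $C = \Sigma^{-1}XX^\top\Sigma^{-1}$, up to the factor $\Sigma$ appearing on either side. The cleaner way to write it is
\[
(\Sigma_A\Sigma^{-1}X)(\Sigma_A\Sigma^{-1}X)^\top = XX^\top .
\]

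Next I extract the subspace. Let $V = \vecspan{X}$. It has dimension at most $k-1$, because the columns of $X$ sum to zero, and it has exactly that dimension generically, since Gaussian draws are almost surely in general position. The identity above shows that $\Sigma_A\Sigma^{-1}$ maps $V$ into $V$ and acts on it as a matrix $T$ satisfying $TT^\top = \Id$ in a suitable basis. To pin this down I would whiten. Put $B = \Sigma^{-1/2}\Sigma_A\Sigma^{-1/2}$, which is symmetric positive definite, and $Z = \Sigma^{-1/2}X$. The equation becomes $BZZ^\top B = ZZ^\top$. Restricted to $U = \vecspan{Z}$, where $ZZ^\top$ is invertible, this says $B$ preserves $U$ and satisfies $BPB = P$ for the positive definite $P = ZZ^\top|_U$.

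The key step, and the one I expect to be the main obstacle, is concluding $B|_U = \Id$. Here I use uniqueness of the positive definite solution of the Riccati-type equation $BPB = P$. If $B$ and $P$ are both symmetric positive definite on $U$, the unique solution is the geometric mean $P^{-1}\#P = \Id$. This is the same AIRM geometric-mean fact the paper uses for the dense solution $M^{-1} = \spdmean{\cov{x}}{\cov{\alpha}^{-1}}$. There is a subtlety: $B$ is symmetric on all of $\R^d$, but I must show that it leaves $U$ invariant and that its restriction is symmetric there. The first point follows because $\operatorname{range}(BZZ^\top B) = \operatorname{range}(ZZ^\top)$, which forces $BU = U$ since $B$ is invertible. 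The restriction of a symmetric operator to an invariant subspace is then symmetric.

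With $B|_U = \Id$ in hand, I translate back. $\Sigma_A$ and $\Sigma$ agree as quadratic forms on $\Sigma^{1/2}U$, which means $\Sigma_A\Sigma^{-1}v = v$ for $v\in V$. So the covariance of $p_\A$ pushed to $\T$ coincides with $\Sigma$ on the $(k-1)$-dimensional subspace $V$. Because $G = -\Sigma^{-1}X$, the span of the scores is $\Sigma^{-1}V$, so this subspace lies in the span of the draws and scores, as stated. Finally, when $k > d+1$ we have $k - 1 \ge d$, so generically $V = \R^d$, and therefore $B = \Id$ and $AA^\top = \Sigma$.
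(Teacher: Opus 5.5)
Your proof is correct and is essentially the paper's argument: your whitening $B=\Sigma^{-1/2}\Sigma_A\Sigma^{-1/2}$, $Z=\Sigma^{-1/2}X$ gives exactly the paper's equation $NPN=P$, and you conclude $B|_U=\Id$ from uniqueness of the positive-definite solution. The one difference is that you get invariance of $U$ from a range argument, whereas the paper shows the off-diagonal block vanishes in an eigenbasis of $P$. One slip to fix: $\Sigma_A$ and $\Sigma$ agree as quadratic forms on $\Sigma^{-1/2}U=\Sigma^{-1}V$ (the score span), not on $\Sigma^{1/2}U$, though your stated conclusion $\Sigma_A\Sigma^{-1}v=v$ for $v\in V$ is the right one and is more accurate than the paper's closing line $AA^\top v=\Sigma v$.
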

\begin{proof}
See \Cref{appendix:mvnorm}.
\end{proof}

If the number of draws $k$ is smaller than $d$, \Cref{eq:dense_equation} has an infinite set of solutions. We add a regularization term $\gamma (\tr{AA^\top} + \tr{(AA^\top)^{-1}})$ to our minimization objective, which yields the optimality condition
\begin{equation}
\label{eq:dense_equation_reg}
    A \, A^\top \! (\cov{\alpha} + \gamma I) \, A \, A^\top = \cov{x} + \gamma I,
\end{equation}
which has the unique solution
\[
M_\gamma^{-1} = \spdmean{\left( \cov{x} + \gamma \Id\right)}
                   {\left(\cov{\alpha} + \gamma \Id\right)^{-1}}.
\]
 
\subsubsection{Low rank plus diagonal transformations}

Diagonal mass matrices are inexpensive because all operations are linear in dimension, but they cannot adjust for correlation at all.  Dense estimates can adjust for arbitrary correlations, but they quickly become infeasible as dimensionality increases, requiring $\mathcal{O}(d^3)$ time for estimation and $\mathcal{O}(d^2)$ per leapfrog step during Hamiltonian simulation, as well as requiring $\mathcal{O}(d^2)$ space.

In \Cref{appendix:min-lowrank} we show that the dense solution only depends on the subspace spanned by the centered scores and gradients. Outside of this subspace, the solution to the optimization problem is unconstrained. By adding a regularization term $R(A, \mu) = \gamma (\tr{AA^\top} + \tr{(AA^\top)^{-1}})$ to the optimization objective, we can ensure a unique solution such that $AA^\top$, constrained to that previously unconstrained space, is the identity matrix. We can represent and compute the resulting transformation efficiently in high dimensions as
\begin{equation}
F_1(y) = \left( \Id + U \, \left(\diag{\lambda}^{1/2} - \Id \right) \, U^\top\right) \, y + \mu_1,
\end{equation}
where $U \in \mathbb{R}^{d\times k}$ has orthonormal columns (i.e., $U^\top U = \Id$).
Each column of $U$ encodes a direction and $F_1$ scales $y$ in that direction by the corresponding value in $\lambda \in (0, \infty)^k$. All other orthogonal directions remain unchanged. See \Cref{alg:lowrank} for how to compute $U$ and $\lambda$.

The regularization term effectively "fills up" all unidentifiable eigenvalues with ones. There is however a priori no good reason to assume that missing eigenvalues would be one. For example, if the posterior is a high-dimensional $\normal{0, 10^{-5}\Id}$, we would use a very ill-conditioned transformation, because we are filling in eigenvalues with one instead of $10^{-5}$. But we previously showed that a diagonal transformation penalizes $\sum \lambda_i + \lambda_i^{-1}$, essentially "centering" the remaining eigenvalues at one. This suggests combining the diagonal and low-rank transformations.

We split the transformation $F$ into two parts such that $F = F_2 \circ F_1$,
\[
\begin{tikzcd}
\A_1 \arrow[r, "F_1"'] \arrow[rr, "F" description, bend left] & \A_2 \arrow[r, "F_2"'] & \T,
\end{tikzcd}
\]
where $F_1$ is a low-rank operation and $F_2$ operates only on the diagonal as described in \Cref{sec: diag_mass_matrix}.  Given a rank $k \leq d$ matrix $U \in \mathbb{R}^{d\times k}$ with orthonormal columns (i.e., $U^\top U = \Id$), 
\begin{eqnarray}
F_1(y) & = & \left( \Id + U \, \left(\diag{\lambda}^{1/2} - \Id \right) \, U^\top\right) \, y + \mu_1
\\[4pt]
F_2(v) & = & \sigma \odot v + \mu_2.
\end{eqnarray}

The combined transformation $F = F_2 \circ F_1$ corresponds to the mass matrix
\[
M = \diag{\sigma}^{-1}\left(\Id + U \, \left(\diag{\lambda}^{-1} - \Id \right) \, U^\top\right) \, \diag{\sigma}^{-1}.
\]

In practice, we avoid storing $M$ and work with $\sigma, \lambda$ and $U$ directly, which requires only $\bigo{kd}$ storage. During sampling, we only need products $M^{-1}\rho$ and evaluations of $F(y)$ to sample from $\normal{0, M}$, both of which can easily be achieved in $\bigo{kd}$ as well.

To further reduce the computational cost of leapfrog steps, we can omit entries in $\sigma$ that are close to one and don't affect the quality of the mass matrix much anyway.

\begin{algorithm}
  \caption{\textsc{low-rank-adapt}}
  \label{alg:lowrank}
  \begin{algorithmic}
    \Require $\x\in\mathbb{R}^{d\times n}: \textrm{draws}, \alpha\in\mathbb{R}^{d\times n}: \textrm{scores}, c: \textrm{cutoff},\;\gamma: \textrm{regularizer}$
    \vspace*{6pt}
    % \State $\sigma \in \mathbb{R}^d$
    % \\
    \For{$i$ in $0:d-1$}
        \State $\sigma_i \gets \sqrt{\var{\x_i} / \var{\alpha_i}}$
            \Comment{set coordinate-wise standard deviations}
    \EndFor
    \\
    \State $\x \gets (\x - \overline{\x}) \,\odot\,\sigma, \quad 
            \alpha \gets (\alpha - \overline{\alpha}) \,\odot\,\sigma$
      \Comment{apply diagonal transform}
    \\
    \State $U^\x \gets \textsc{svd}(\x),\quad U^\alpha \gets \textsc{svd}(\alpha)$
    \\
    \State $Q,\_ \gets \textsc{qr-thin}\bigl([\,U^\x\;\;U^\alpha\,]\bigr)$
      \Comment{get jointly-spanned orthonormal basis}
    \\
    \State $P^\x \gets Q^\top \x,\quad P^\alpha \gets Q^\top \alpha$
      \Comment{project onto shared subspace}
    \\
    \State $C^\x \gets P^\x (P^\x)^\top + \gamma I, \quad 
            C^\alpha \gets P^\alpha (P^\alpha)^\top + \gamma I$
      \Comment{empirical covariance, regularize}
    \\
    \State $\Sigma \gets \textsc{spdm}({C^\x}, {C^\alpha})$
      \Comment{solve $\Sigma\,C^\alpha\,\Sigma = C^\x$ for $\Sigma$}
    \State $U, \Lambda \gets \textsc{eigendecompose}(\Sigma)$
      \Comment{extract eigendecomposition}
    \State $U_c \gets \{\,U_i : \lambda_i \le 1/c \;\textrm{ or }\; \lambda_i \ge c\}$
    \Comment{filter eigenpairs}
    \State $\Lambda_c \gets \{\,\lambda_i : \lambda_i \le 1/c \;\textrm{ or }\; \lambda_i \ge c\}$
    \\
    \State \Return $Q, \, U_c, \, \Lambda_c$
  \end{algorithmic}
\end{algorithm}

\begin{algorithm}[t]
\caption{\textsc{symmetric positive-definite mean (spdm)}}
\label{alg:spdm}
\begin{algorithmic}
  \Require $A, B \in \mathbb{R}^{n\times n}$ symmetric, positive-definite \vspace*{6pt}
  \State $U_A, \Lambda_A \gets \textsc{eigendecompose}(A)$ \vspace*{3pt}
  \State $A^{{1/2}} \gets U_A \, \textrm{diag}({\Lambda_A}^{1/2}) \, U_A^\top$ \vspace*{3pt}
  \State $A^{-{1/2}} \gets U_A \, \textrm{diag}(\Lambda_A^{-1/2}) \, U_A^\top$ \vspace*{3pt}
    \State $M \gets A^{1/2} \, B \, A^{1/2}$ \vspace*{3pt}
    \State $U_M, \Lambda_M \gets \textsc{eigendecompose}(M)$ \vspace*{3pt}
    \State $M^{1/2} \gets U_M \, \textrm{diag}(\Lambda_M^{1/2}) \, U_M^\top$ \vspace*{6pt}
    \State \Return $A^{-1/2} \, M^{1/2} \, A^{-1/2}$
\end{algorithmic}
\end{algorithm}

\section{Adaptation schedule}

In order to turn our optimal update rules into a mass-matrix adaptation algorithm, we need to introduce a schedule dictating how often during sampling the mass matrix is updated and which set of draws and scores are used for the update.  While we follow roughly the standard pattern of adaptation, this section introduces two novel modifications to the algorithm used by Stan and other probabilistic programming languages.

Whether adapting a mass matrix using the posterior variance, as Stan does, or using a matrix based on the Fisher divergence, preconditioning invariably faces the same chicken-and-egg problem, namely that generating suitable posterior draws requires a good mass matrix, but estimating a good mass matrix requires acceptable posterior draws. 

Stan and other probabilistic programming languages begin with the identity transformation, collect a number of draws, then based on those draws, estimate a better transformation, and repeat \citep{betancourt_conceptual_2018, phan2019composable, cabezas2024blackjax}.  These systems, as well as \texttt{nutpie}, do not use standard HMC, but rather use the No-U-Turn Sampler (NUTS) \citep{hoffman2014no,betancourt_conceptual_2018}, where the length of the Hamiltonian trajectory is chosen dynamically.
%(For the full algorithm, see \Cref{appendix_nuts}.) 

With NUTS, it is extremely costly to generate draws with a poorly tuned mass matrix, because in ill-conditioned cases, the algorithm will require a minuscule step size for stability and a huge number of steps before satisfying the U-turn condition.  This problem is exacerbated by Stan's default cap of the number of leapfrog steps per draw of $2^{10}=1024$.  This is another way of saying that HMC is highly sensitive to condition number \citep{langmore_condition_2020}.  By default, Stan starts adaptation with a step-size adaptation window of 75 draws, using an identity mass matrix. This is followed by a mass matrix adaptation window consisting of a series of discrete intervals of increasing length, the first of which is 25 draws long. These initial 100 draws before the first adapted mass matrix can, perhaps counterintuitively, constitute a sizable fraction of the total sampling time over Stan's default 2000 iterations. 

Avoiding these costly warmup scenarios motivates the use of all available information about the posterior as quickly as possible during sampling. To this end, the \texttt{nutpie} sampler implements a faster-paced adaptation schedule than the NUTS algorithm.  We propose two fairly small modifications that provide a substantial boost in efficiency.  

\subsection{Improved initialization}

The first modification introduced by \texttt{nutpie} is that the mass matrix $M$ is initialized to $\diag{\left| \drw{\alpha}{0} \right|}$, where $\drw{\alpha}{0}$ is the score of the first draw.  The absolute score is the solution to \Cref{eq:dense_equation} when $\cov{\alpha}$ is taken to be the initial score squared (i.e., estimated covariance, assuming a mean of zero), and $\cov{\x}$ is taken to be the identity. This can also be viewed as a regularized diagonal of the gradient outer-product, which is a common approximation to the Hessian at $\x_0$, used in systems such as L-BFGS optimization \citep{nocedal2006numerical}. As a result, we have the following straightforward result.

\begin{lemma}[Scale free]
The \texttt{nutpie} adaptation algorithm is scale free in the sense that sampling from the unnormalized densities $p(\theta)$ and $q(\theta) = p(c \odot \theta)$ for any vector $c$ has the same efficiency.
\end{lemma}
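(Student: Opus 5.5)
We argue by coupling: the sampler run on $q$ is exactly the image of the sampler run on $p$ under the diagonal map $\theta \mapsto c^{-1}\odot\theta$, taking $c$ with nonzero entries. We assume the initial point is transformed accordingly, $\theta_0^q = c^{-1}\odot\theta_0^p$, and that the same random numbers (momenta, uniforms, tree directions) are used in both runs. Write $D_c = \diag{c}$. If $\theta^p$ is a draw for $p$, then $\theta^q = D_c^{-1}\theta^p$ is the corresponding point for $q$. Its score is
\[
\alpha^q(\theta^q) = D_c\,\alpha^p(D_c\theta^q) = D_c\,\alpha^p(\theta^p),
\]
so positions scale by $c^{-1}$ and scores scale by $c$, coordinatewise.

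The plan is to show by induction over the adaptation schedule that the adapted affine maps satisfy $F^q = D_c^{-1}\circ F^p$. Equivalently, $\sigma^q = c^{-1}\odot\sigma^p$ and $\mu^q = c^{-1}\odot\mu^p$, and in the low-rank case $U$ and $\lambda$ are identical in the two runs. Then the adapted-space densities $p_\A$ coincide, since the Jacobian factor is constant and cancels in the normalization. HMC/NUTS in the adapted space with identical step size and randomness then produces identical trajectories, identical acceptance decisions and tree depths, and draws related by $D_c^{-1}$. Hence efficiency, measured as gradient evaluations per effective sample of any coordinate, is the same.

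\textbf{Base case.} The initial mass matrix is $M=\diag{|\drw{\alpha}{0}|}$, so $\sigma = |\drw{\alpha}{0}|^{-1/2}$. Under the map, $|\alpha^q_0| = |c|\odot|\alpha^p_0|$. This gives $\sigma^q = |c|^{-1/2}\odot\sigma^p$, which is not $c^{-1}\odot\sigma^p$. I expect this to be the main obstacle: taken literally, the initial preconditioner is only scale-free to "half order". I would first check whether the implementation uses $\sigma^2 = \sqrt{\cov{x}/\cov{\alpha}}$ with $\cov{x}=\Id$ interpreted in units of $x$. Failing that, I would restate the claim as follows. After the first Fisher update, the estimator depends only on the scale-equivariant statistics, so any initial discrepancy affects only the first window. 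One then either weakens "same efficiency" to hold after the first adaptation, or argues that step-size adaptation absorbs a common scale.

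\textbf{Inductive step.} This step is clean. By the diagonal theorem,
\[
{\sigma^2}^* = \sqrt{\var{x}/\var{\alpha}},
\]
and $\var{x^q}=c^{-2}\var{x^p}$, $\var{\alpha^q}=c^{2}\var{\alpha^p}$. Hence ${\sigma^2}^{*q} = c^{-2}{\sigma^2}^{*p}$, so $\sigma^{*q}=|c|^{-1}\sigma^{*p}$. For the location,
\[
\mu^{*q} = \overline{x^q}+{\sigma^2}^{*q}\odot\overline{\alpha^q} = c^{-1}\odot\mu^{*p},
\]
and the sign of $c$ is immaterial. Welford running statistics are linear, so they transform the same way. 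For the low-rank step in \Cref{alg:lowrank}, the diagonally rescaled centered draws and scores are identical in both runs, since $c^{-1}\cdot|c|$ equals $\pm1$ coordinatewise and a sign flip is an orthogonal map. Therefore $Q$, $\Sigma$, $U$ and $\lambda$ agree up to that sign conjugation. Step-size dual averaging sees identical acceptance statistics, so it produces identical step sizes, which closes the induction.
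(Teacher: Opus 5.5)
Your coupling set-up is what the paper's one-line proof asserts: the initial mass matrix and all later scores ``scale with $c$'', so trajectories correspond. Your inductive step is sound. $\sqrt{\var{x}/\var{\alpha}}$ picks up $c^{-2}$ and $\mu^*$ picks up $c^{-1}$. The low-rank step is invariant provided you read \Cref{alg:lowrank} as dividing the centered draws by, and multiplying the centered scores by, the square root of that ratio. As printed, both are multiplied by the ratio itself, which is not equivariant.

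Your base-case computation is also correct, and it is a genuine gap that the paper's proof shares rather than resolves. With $D_c = \diag{c}$, momenta transform as $\rho^q = D_c\rho^p$, so trajectories correspond only if $M^q = D_c M^p D_c$. A diagonal mass matrix must therefore pick up $c^2$, whereas $\diag{|\alpha^{(0)}|}$ picks up only $|c|$. ``Scales with $c$'' is the wrong power. For non-uniform $c$, the very first transition runs on an adapted density whose $j$-th coordinate is rescaled by $|c_j|^{-1/2}$ relative to the $p$-run, and hence has a different condition number. So neither your argument nor the paper's establishes the lemma as stated, and exact equality of efficiency should not be expected.

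The repairs you sketch do not close this gap. First, the discrepancy is not confined to the first window. $\theta^{(1)}$ is already generated under the mismatched initial matrix, so $\theta^{(1),q} \neq c^{-1}\odot\theta^{(1),p}$. Since $a_n = 0$ for $n < 2L$, these decoupled draws enter every estimate until $n = 2L$. After that, the chain state and all later draws remain decoupled, so your induction hypothesis never holds again. At best you would get an asymptotic or distributional comparison, not equal efficiency.

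Second, step-size adaptation can absorb only a global scalar mismatch, i.e.\ $|c_j|$ constant in $j$, and even then only if the initial step size is itself chosen equivariantly. A coordinate-dependent factor changes $\kappa'$, which a scalar step size cannot undo.

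What your argument actually proves is a corrected lemma. The adaptation is exactly scale free for any initializer with $M_0^q = D_c M_0^p D_c$, for instance the unregularized $\diag{\alpha^{(0)}\odot\alpha^{(0)}}$. That is the statement you should state and prove, rather than the lemma with $\diag{|\alpha^{(0)}|}$.
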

\begin{proof}
The initialization at $M = \diag{\left| \drw{\alpha}{0} \right|}$ scales with $c$, as do all subsequent scores and hence trajectories.
\end{proof}

\subsection{More frequent updates}

The second modification introduced by \texttt{nutpie} is to update the mass matrix estimates more frequently, starting from the very first draw.  Stan waits until after 75 warmup iterations (by default) before it uses draws for estimating the mass matrix, in the hope that the Markov chain will have approximately converged to the stationary distribution; this is not unreasonable, as HMC mixes very quickly.  We do not want to wait that long.  We would rather integrate information from draws into estimates of the mass matrix immediately upon sampling them, while also discounting very early draws as sampling progresses.  We discount earlier draws because they will not be drawn from the (approximate) stationary distribution.  

For diagonal mass matrix estimates, we use Welford's algorithm to store online estimates of the variance of draws and scores to compute an up-to-date value for \Cref{eq:diag_solution} at each iteration. Instead of modifying the algorithm with a weighting scheme to taper the influence of these early draws, we periodically chop off the tail end from the set of draws our estimator is built upon. In other words, old draws are periodically wiped from the memory of the online estimate. The frequency with which this memory wipe occurs is a tunable parameter, $L$. With this approach, the number of draws informing the estimates after the first $L$ oscillates between $L$ and $2L$.  The matrix used to sample draw $n$ is always based on an estimate constructed from the sequence of draws $\drw{\theta}{a_n}, \ldots, \drw{\theta}{n-1}$, where
\begin{equation*}
    a_n = \max\bigl(0,\,L\,(\left\lfloor n / L \right\rfloor-1)\bigr).
\end{equation*}
The implementation of this scheme for diagonal \texttt{nutpie} is illustrated with an example of the first 20 draws in \Cref{appendix:foreground-background}. 

Low-rank plus diagonal adaptation requires us to store the draws and the scores in the window used for estimating the mass matrix.  In this case, the mass matrix is only updated at multiples of $L$ iterations. 

\subsection{Step-size adaptation phases}

The matrix adaptation occurs in conjunction with the dual averaging step-size adaptation of standard NUTS, which aims to tune $\epsilon$ to achieve a target acceptance rate over states in the last half of the Hamiltonian trajectories.  The stochastic gradient update on which dual averaging is based assumes a squared error, so gradient descent pushes step size lower if the acceptance rate is too low and higher if it is too high, lowering the learning rate over time.

We organize warmup into three phases based on the hyperparameters used to tune the mass matrix $M$ and step size $\epsilon$ adaptation.  
\begin{enumerate}
    \item The first 30\% of warmup iterations, adapts a mass matrix with an update frequency $L = 10$ using the Metropolis acceptance probability $\exp(\min(0,\Delta H))$ where $\Delta H$ is the change in the Hamiltonian induced by the error in the leapfrog integrator.
    \item The next 55\% of warmup iterations use $L = 80$ and the same acceptance probability for tuning. At the start of the second phase, the step size is reinitialized in the same manner as at the first draw.
    \item Following NUTS as implemented in Stan, the final 15\% of the warmup draws are devoted exclusively to fine-tuning the step size $\epsilon$.  Unlike the first two phases, the third phase uses the symmetric acceptance statistic $$\frac{2\exp(\min(0,\Delta H))}{1 +\exp(\Delta H)}.$$
\end{enumerate}

\section[Implementation and evaluation on posteriordb]{Implementation and evaluation on \texttt{posteriordb}}

We systematically test \texttt{nutpie}'s diagonal and low-rank samplers on the models in \texttt{posteriordb}, using Stan's default NUTS implementation as a benchmark. We run 4 chains of each sampler for 1000 warmup iterations and 1000 posterior draws on all models, using a target acceptance rate of 0.8. For \texttt{nutpie}'s low-rank mode, we use an eigenvalue cutoff $c = 2$, and regularization parameter $\gamma = 10^{-5}$. From our tests we exclude some exceptionally slow models, listed in \Cref{appendix:posteriordb}. We also discard models which did not converge for any of the solver configurations, defining convergence as if there are no divergences during the run, and the effective sample size exceeds 200 for all parameters. 114 models remained after this filtering. Among these, \texttt{nutpie}'s diagonal mode required a median \textasciitilde0.75x the number of gradient evaluations as Stan for an effective sample. In wall time, this translated to a \textasciitilde1.3x median speedup. The low-rank mode had a median \textasciitilde4x speedup in both effective sample size per gradient evaluation and ESS per second. Detailed results can be seen in \Cref{fig:ecdfs}. Additional diagnostic comparisons can be found in \Cref{appendix:posteriordb}.

\begin{figure}[H]
    \centering
    \includegraphics[width=0.95\textwidth]{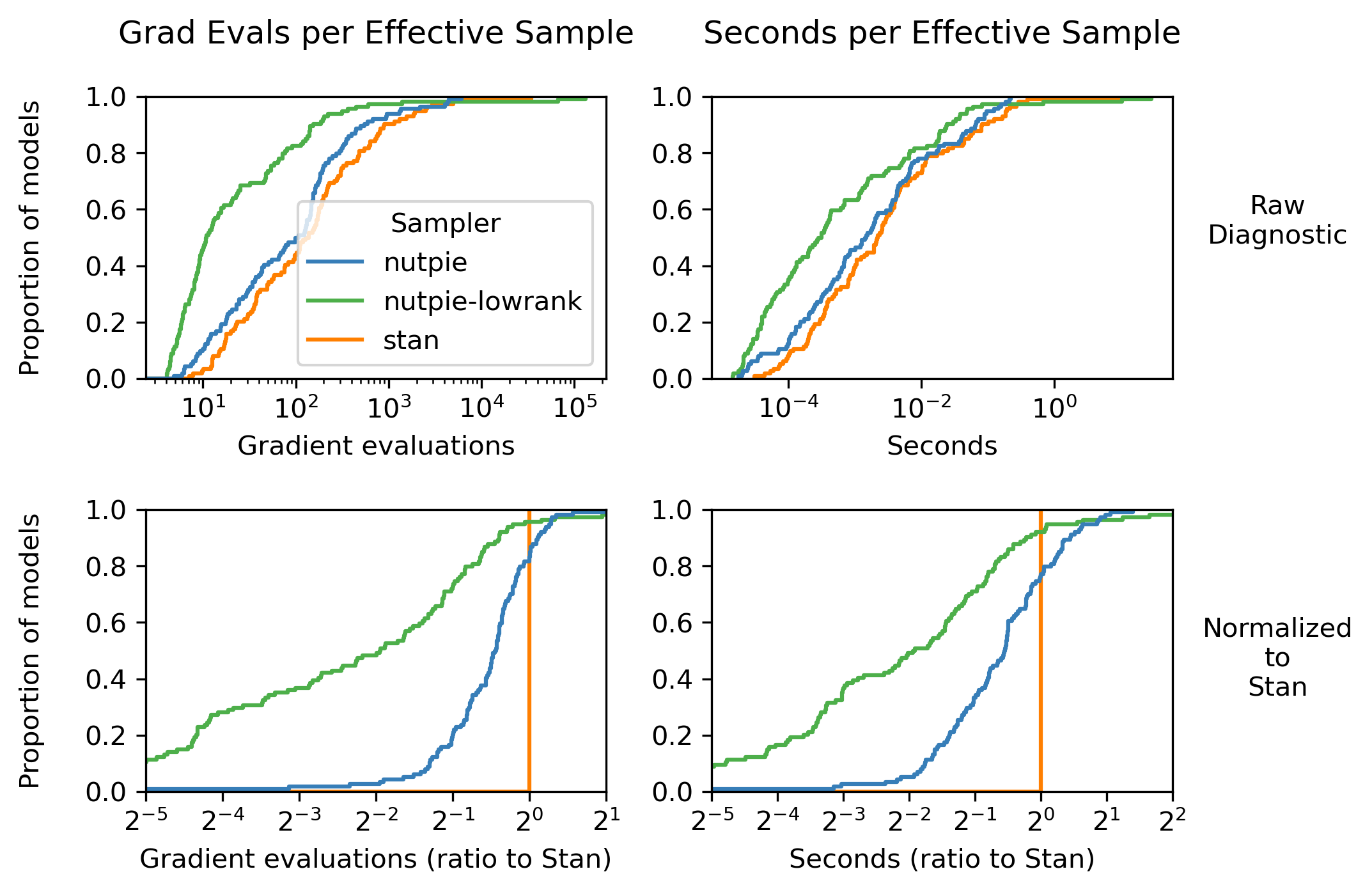}
    \caption{\textit{Cumulative distribution plots for target densities in \texttt{posteriordb}.  Each line represents a different sampler, \texttt{nutpie} (blue), Stan (orange), and \texttt{nutpie} low-rank (green) run for 1000 warmup iterations and 1000 posterior draws. The top row of plots show the raw diagnostic, while the bottom show the ratio of the diagnostic to Stan's. For instance, a point $(x,y)$ on the lower gradients plot says that a fraction $y$ of \texttt{posteriordb} models had $x$ or fewer times the number of gradient evaluations per effective sample as Stan's. Lines to the right of 1 show a small fraction of models for which Stan's default outperforms the other options.}}
    \label{fig:ecdfs}
\end{figure}

\begin{figure}
    \centering
    \includegraphics[width=0.95\textwidth]{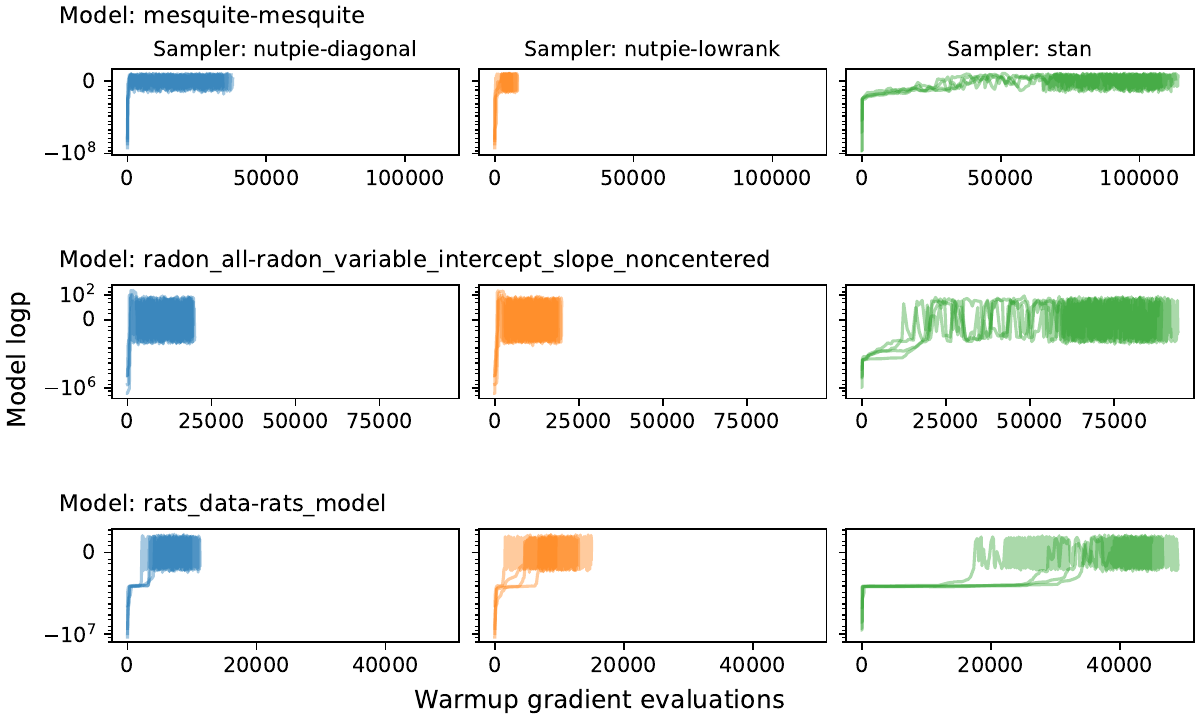}
    \caption{\textit{Trace-plot of sample log density scaled by number of gradient evaluations for three models from \texttt{posteriordb}. Each row corresponds to one model, the rows show \texttt{nutpie} with diagonal mass matrix adaptation, \texttt{nutpie} with low rank adaptation and Stan, respectively. The x-axis shows the number of gradient evaluations rather than the number of draws as typical with a trace-plot. All samplers run 1000 warmup draws. \texttt{nutpie} uses significantly fewer gradient evaluations in total, and avoids a long inefficient phase seen in Stan. \Cref{fig:warmup-trace-random} shows this trace-plot for 15 randomly selected models from \texttt{posteriordb}.}}
    \label{fig:warmup-trace}
\end{figure}

\section{More general transforms}

While in this paper we focus on affine transformations $F$, corresponding to a constant mass matrix in the HMC metric interpretation, the same framework applies to more general, nonlinear transformations that induce position-dependent metrics. This recovers a form of Riemannian HMC, where the metric is learned by minimizing the Fisher divergence. We have implemented an experimental version of this in \texttt{nutpie} that uses normalizing flows; see \Cref{appendix:transformed-hmc} for the geometric formulation.

\section{Open-source implementation}

All evaluations of Fisher-divergence based preconditioning in this paper were carried out using \texttt{nutpie}, an optimized Rust-based implementation that is distributed with a permissive open-source MIT License through the PyMC project.\footnote{See \url{https://pymc-devs.github.io/nutpie/}.}  Stan was used for the NUTS comparison.\footnote{See \url{https://mc-stan.org/docs/reference-manual/mcmc.html}.}

\section{Conclusion}

We have presented algorithms to construct three novel forms of linear preconditioner for Hamiltonian Monte Carlo which incorporate
information from the scores of warmup draws to better approximate target covariance. The added
information in the scores allows for quicker adaptation of the mass matrix than state of the art
HMC samplers. 

We have limited our discussion to linear transformations, but the Fisher
divergence framework naturally generalizes to nonlinear transformations and to
transformations on Riemannian Manifolds, yielding
position-dependent metrics and a practical link to Riemannian HMC.
Experiments with normalizing flows parameterized by neural networks,
as well as smaller model-informed transformations, are promising, and
we plan to report them separately.

\newpage
\bibliographystyle{apalike}
\bibliography{FisherHMCPaper}

@article{langmore_condition_2020,
  title = {A Condition Number for {Hamiltonian Monte Carlo}},
  author = {Langmore, Ian and Dikovsky, Michael and Geraedts, Scott and Norgaard, Peter and Von Behren, Rob},
  journal = {arXiv preprint arXiv:1905.09813},
  year = {2020}
}

@incollection{neal2011mcmc,
  title={{MCMC} using {Hamiltonian} dynamics},
  author={Neal, Radford M},
  booktitle={Handbook of {Markov} chain {Monte Carlo}},
  editor={Brooks, Steve and Gelman, Andrew and Jones, Galin and Meng, Xiao-Li},
  pages={47--95},
  year={2011},
  publisher={Chapman and Hall/CRC}
}

@article{tran_tuning_2024,
  title={Tuning diagonal scale matrices for {HMC}},
  author={Tran, Jimmy Huy and Kleppe, Tore Selland},
  journal={Statistics and Computing},
  volume={34},
  number={6},
  pages={196},
  year={2024},
  publisher={Springer}
}

@article{carpenter_stan_2017,
  title = {Stan: A Probabilistic Programming Language},
  author = {Carpenter, Bob and Gelman, Andrew and Hoffman, Matthew D. and Lee, Daniel and Goodrich, Ben and Betancourt, Michael and Riddell, Andrew},
  year = {2017},
  journal = {Journal of Statistical Software},
  volume = {76},
  number = {1},
  pages = {1--32},
  doi = {10.18637/jss.v076.i01}
}

@article{hoffman2014no,
  title={The {No-U-Turn} sampler: adaptively setting path lengths in {Hamiltonian Monte Carlo}},
  author={Hoffman, Matthew D and Gelman, Andrew},
  journal={Journal of Machine Learning Research},
  volume={15},
  number={1},
  pages={1593--1623},
  year={2014}
}

@misc{magnusson_posteriordb_2024,
  title={posteriordb: Testing, Benchmarking and Developing {B}ayesian Inference Algorithms},
  author={Måns Magnusson and Jakob Torgander and Paul-Christian Bürkner and Lu Zhang and Bob Carpenter and Aki Vehtari},
  booktitle={Proceedings of The 28th International Conference on Artificial Intelligence and Statistics},
  pages={1198--1206},
  year={2025},
  journal={Proceedings of Machine Learning Research},
  month={03--05 May},
  publisher={PMLR},
  url={https://proceedings.mlr.press/v258/magnusson25a.html},
}

@article{salvatier2016probabilistic,
  title={Probabilistic programming in {P}ython using {PyMC3}},
  author={Salvatier, John and Wiecki, Thomas V and Fonnesbeck, Christopher},
  journal={PeerJ Computer Science},
  volume={2},
  pages={e55},
  year={2016},
  publisher={PeerJ Inc.}
}

@misc{betancourt_conceptual_2018,
      title={A Conceptual Introduction to {H}amiltonian {M}onte {C}arlo}, 
      author={Michael Betancourt},
      year={2018},
      eprint={1701.02434},
      archivePrefix={arXiv},
      primaryClass={stat.ME},
      url={https://arxiv.org/abs/1701.02434},
      journal={arXiv preprint arXiv:1701.02434}
}

@article{welford1962,
    title={Note on a Method for Calculating Corrected Sums of Squares and Products},
    author={Welford, B.P.},
    year={1962},
    journal={Technometrics},
    volume={4},
    number={3},
    pages={419--420},
}

@article{cabezas2024blackjax,
  title={{BlackJAX}: Composable {{B}ayesian} inference in {JAX}},
  author={Cabezas, Alberto and Corenflos, Adrien and Lao, Junpeng and Louf, R{\'e}mi and Carnec, Antoine and Chaudhari, Kaustubh and Cohn-Gordon, Reuben and Coullon, Jeremie and Deng, Wei and Duffield, Sam and others},
  journal={arXiv preprint arXiv:2402.10797},
  year={2024}
}

@article{bales2019,
  title={Selecting the Metric in {Hamiltonian Monte Carlo}},
  author={Bales, Ben and Pourzanjani, Arya and Vehtari, Aki and Petzold, Linda},
  journal={arXiv preprint arXiv:1905.11916},
  year={2019}
}

@article{wainwright2008graphical,
  title={Graphical models, exponential families, and variational inference},
  author={Wainwright, Martin J and Jordan, Michael I},
  journal={Foundations and Trends in Machine Learning},
  volume={1},
  number={1--2},
  pages={1--305},
  year={2008},
  publisher={Now Publishers, Inc.}
}

@book{leimkuhler2004simulating,
  title={Simulating Hamiltonian Dynamics},
  author={Leimkuhler, Benedict and Reich, Sebastian},
  year={2004},
  publisher={Cambridge University Press}
}

@article{phan2019composable,
  title={Composable effects for flexible and accelerated probabilistic programming in {N}um{P}yro},
  author={Phan, Du and Pradhan, Neeraj and Jankowiak, Martin},
  journal={arXiv preprint arXiv:1912.11554},
  year={2019}
}

@inproceedings{ge2018turing,
  title={Turing: a language for flexible probabilistic inference},
  author={Ge, Hong and Xu, Kai and Ghahramani, Zoubin},
  booktitle={International Conference on Artificial Intelligence and Statistics},
  pages={1682--1690},
  year={2018},
  organization={PMLR}
}

@article{betancourt2015hamiltonian,
  title={Hamiltonian {M}onte {C}arlo for hierarchical models},
  author={Betancourt, Michael and Girolami, Mark},
  journal={Current Trends in {B}ayesian Methodology with Applications},
  volume={79},
  number={30},
  pages={2--4},
  year={2015}
}

@article{papaspiliopoulos2007general,
  title={A general framework for the parametrization of hierarchical models},
  author={Papaspiliopoulos, Omiros and Roberts, Gareth O and Sk{\"o}ld, Martin},
  journal={Statistical Science},
  pages={59--73},
  year={2007}
}

@article{ando2004geometric,
  title={Geometric means},
  author={Ando, Tsuyoshi and Li, Chi-Kwong and Mathias, Roy},
  journal={Linear Algebra and its Applications},
  volume={385},
  pages={305--334},
  year={2004}
}

@inproceedings{malago2015information,
  title={Information geometry of the {G}aussian distribution in view of stochastic optimization},
  author={Malag{\`o}, Luigi and Pistone, Giovanni},
  booktitle={Proceedings of the 2015 ACM Conference on Foundations of Genetic Algorithms XIII},
  pages={150--162},
  year={2015}
}

@book{nocedal2006numerical,
  title={Numerical Optimization},
  author={Nocedal, Jorge and Wright, Stephen J},
  year={2006},
  publisher={Springer}
}

\newpage
\appendix

\section{The geometry of symmetric, positive-definite matrices}

Our mass matrices need to be symmetric and positive definite in order to define a proper covariance over momentum vectors.  Therefore, we will be working in the smooth manifold of of $d\times d$ symmetric positive-definite matrices, which is conventionally denoted by $\mathcal{S}_{++}^d$.

\subsection{The affine-invariant Riemannian metric}\label{app:AIRM}

There is more than one metric $d(A, B)$ that can be defined over symmetric, positive-definite matrices $A, B \in \mathcal{S}_{++}^d$.  We will concentrate on the affine-invariant Riemannian metric (AIRM), which is defined as follows.

\begin{definition}[AIRM]
The \emph{affine--invariant Riemannian metric} on $\mathcal{S}_{++}^d$ is defined for $P\in \mathcal{S}_{++}^d$ and tangent vectors $U,V\in T_P\mathcal{S}_{++}^d \cong \mathbb{S}^d$ by the inner product on the tangent space at $P$,
\[
\innerproduct{P}{U}{V}
= \tr{P^{-1} U\, P^{-1} V}.
\]
\end{definition}

\begin{theorem}[Distance]
Distance in the affine-invariant Riemannian metric satisfies
$$
d(A, B) 
= \left| \left| \, \log\left(A^{-1/2} \, B \, A^{-1/2}\right) \, \right|\right|_F,
$$
where the norm $|| X ||_F = \sqrt{\tr{X^\top \, X}}$ is the Frobenius norm and the logarithm is the matrix logarithm.
\end{theorem}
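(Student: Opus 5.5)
Under the affine-invariant Riemannian metric, the distance between $A$ and $B$ is $\|\log(A^{-1/2}BA^{-1/2})\|_F$. We will prove this in two stages. First we use the affine invariance of the metric to move $A$ to the identity. Then we compute the distance from $\Id$ to an arbitrary point by reducing to the diagonal (commuting) case, where the manifold is isometric to Euclidean space through the entrywise logarithm.

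\textbf{Step 1 (invariance).} For invertible $G$, consider the congruence map $\phi_G(P) = G P G^\top$. Its differential sends a tangent vector $U$ at $P$ to $G U G^\top$. Substituting into the metric gives
\[
\innerproduct{GPG^\top}{GUG^\top}{GVG^\top} = \tr{(GPG^\top)^{-1}GUG^\top(GPG^\top)^{-1}GVG^\top} = \tr{P^{-1}UP^{-1}V},
\]
using cyclicity of the trace. Hence $\phi_G$ is an isometry. It therefore preserves curve lengths and distances, so $d(A,B) = d(\Id, A^{-1/2}BA^{-1/2})$ with $G = A^{-1/2}$. Orthogonal conjugations are also isometries by the same computation. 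Writing $C = A^{-1/2}BA^{-1/2} = Q\Lambda Q^\top$, it suffices to show
\[
d(\Id, \Lambda) = \norm{\log \Lambda}_F = \Bigl(\textstyle\sum_i (\log\lambda_i)^2\Bigr)^{1/2}.
\]

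\textbf{Step 2 (upper bound).} The curve $\gamma(t) = \Lambda^t$, $t\in[0,1]$, runs from $\Id$ to $\Lambda$. Its velocity is $\dot\gamma = \Lambda^t\log\Lambda$, so the metric gives $\tr{\gamma^{-1}\dot\gamma\gamma^{-1}\dot\gamma} = \tr{(\log\Lambda)^2}$. The speed is therefore constant and the length is $\norm{\log\Lambda}_F$.

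\textbf{Step 3 (lower bound, the main obstacle).} We must show that no curve $P(t)$ from $\Id$ to $C$ is shorter. Our plan is to use the map $\log$. The key inequality needed is that its differential does not increase length at identity-based scale:
\[
\norm{\tfrac{d}{dt}\log P(t)}_F \le \norm{P^{-1/2}\dot P P^{-1/2}}_F.
\]
Given this, the length of $P$ is at least the Euclidean length of the curve $\log P(t)$ in the symmetric matrices. That Euclidean length is at least $\norm{\log C - \log \Id}_F = \norm{\log C}_F$.

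To prove the inequality, diagonalize $P = W D W^\top$ at a fixed time and write the tangent vector in that basis as $\tilde U = W^\top\dot P W$. The Daleckii--Krein formula gives the derivative of $\log$ entrywise as $\tilde U_{ij}\,\frac{\log d_i - \log d_j}{d_i - d_j}$, with diagonal entries $\tilde U_{ii}/d_i$. The right-hand side has entries $\tilde U_{ij}/\sqrt{d_id_j}$. So the claim reduces to the scalar inequality
\[
\frac{\log a - \log b}{a-b} \le \frac{1}{\sqrt{ab}},
\]
which is the logarithmic-mean versus geometric-mean inequality. Entrywise domination then yields the Frobenius bound.

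If this step proves cumbersome, an alternative is to cite completeness and nonpositive curvature of $\mathcal{S}_{++}^d$, so that geodesics are unique and minimizing. One would then verify directly that $t\mapsto C^t$ solves the geodesic equation $\ddot P = \dot P P^{-1}\dot P$. Combining Steps 1–3 gives the stated formula.
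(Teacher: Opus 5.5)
The paper gives no proof of this statement; it is asserted as a standard fact about $\mathcal{S}_{++}^d$, and the neighbouring geometric-mean theorem is simply referred to Ando, Li and Mathias. Your argument is correct and is a complete proof along the classical route of the exponential metric increasing property (as in Bhatia's \emph{Positive Definite Matrices}). Congruence invariance of the metric tensor reduces to $A=\Id$, and the curve $C^t$ gives the upper bound. The Daleckii--Krein formula plus the logarithmic-mean/geometric-mean inequality then shows that $\log$ is length-nonincreasing from the AIRM into the Frobenius geometry of symmetric matrices, which gives the matching lower bound. One small precision: in the Daleckii--Krein step the divided difference must be read as $1/d_i$ whenever $d_i=d_j$, including off-diagonal entries when eigenvalues repeat. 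The bound holds with equality there, so nothing breaks. Compared with the paper's bare assertion, your route buys two things. First, Step~1 proves the paper's affine-invariance corollary directly from the metric tensor (take $G=X^\top$), so in your ordering it is an input rather than a consequence of the distance formula. Second, the equality case comes almost for free. A curve from $\Id$ to $C$ of length $\|\log C\|_F$ must attain equality in the chord bound, which forces $\log P(t)=f(t)\log C$ with $f$ nondecreasing, so every shortest curve is a reparametrization of $C^t$. Transporting back, $t\mapsto A^{1/2}(A^{-1/2}BA^{-1/2})^tA^{1/2}$ is, up to reparametrization, the unique shortest curve from $A$ to $B$, and its constant-speed midpoint is exactly $\spdmean{A}{B}$ as in \Cref{thm:dense-geometric-mean}, which the paper only cites. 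Your Cartan--Hadamard fallback would also work (simple connectivity holds since $\mathcal{S}_{++}^d$ is convex), but it imports completeness and a curvature computation that your main argument avoids.
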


\begin{corollary}[Affine invariance]
This distance function is affine-invariant in the sense that for any conforming invertible matrix $X$,
$$
d(A, B) = d(X^\top \! A \, X, \,
            X^\top \! B \, X).
$$
\end{corollary}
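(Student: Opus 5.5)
The plan is to substitute directly into the distance formula of the preceding theorem and reduce everything to the observation that the matrix $A^{-1/2} B A^{-1/2}$ changes only by a congruence by an orthogonal matrix. Set $A' = X^\top A X$ and $B' = X^\top B X$. Both are symmetric positive definite because $X$ is invertible, so $d(A', B')$ is defined. Let $C = A^{-1/2} B A^{-1/2}$ and $C' = A'^{-1/2} B' A'^{-1/2}$. Since the Frobenius norm of $\log C$ is $\bigl(\sum_i (\log \lambda_i(C))^2\bigr)^{1/2}$, where $\lambda_i(C)$ are the (positive) eigenvalues of the symmetric positive-definite matrix $C$, it suffices to show that $C$ and $C'$ have the same spectrum.

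First I will check that the Frobenius norm of $\log$ depends only on eigenvalues. Write $C = V \Lambda V^\top$ with $V$ orthogonal. Then $\log C = V \log\Lambda\, V^\top$, and $\tr{(\log C)^2} = \sum_i (\log\lambda_i)^2$.

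Next I will avoid computing $A'^{-1/2}$ explicitly, since square roots do not commute with congruence. Instead I will use similarity. $C'$ is similar to $A'^{-1} B'$ via conjugation by $A'^{1/2}$. Likewise $C$ is similar to $A^{-1}B$. Then
\[
A'^{-1}B' = X^{-1}A^{-1}X^{-\top}X^\top B X = X^{-1}(A^{-1}B)X,
\]
so $A'^{-1}B'$ is similar to $A^{-1}B$. Hence $C$ and $C'$ are similar and have identical eigenvalues. Both are symmetric positive definite, so those eigenvalues are real and positive, and their logarithms match. Combining this with the first step gives $d(A,B) = d(A',B')$.

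The only potential obstacle is the temptation to write $(X^\top A X)^{-1/2} = X^{-1}A^{-1/2}X^{-\top}$, which is false in general. Routing the argument through the similar, non-symmetric matrix $A^{-1}B$ sidesteps this entirely. If one instead wanted an explicit orthogonal relation, one could set $Q = A^{1/2} X A'^{-1/2}$, check that $Q^\top Q = I$, and then verify $C' = Q^\top C Q$. The spectral argument above is shorter, so that is the route I would take.
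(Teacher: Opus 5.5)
Your proof is correct. The identity $A'^{-1}B' = X^{-1}(A^{-1}B)X$, combined with $A^{-1}B = A^{-1/2}CA^{1/2}$ and the analogous relation for $C'$, shows that $C$ and $C'$ are similar. Because both are symmetric positive definite, they therefore share the same positive spectrum, and $\|\log C\|_F^2 = \sum_i (\log\lambda_i)^2$ finishes the argument. Your warning about $(X^\top A X)^{-1/2}$ is also well placed.

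The paper gives no proof of this corollary; it only lists it after the closed-form distance. Your substitution into that formula is exactly the derivation the word ``corollary'' suggests.

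A different route avoids the closed form altogether. The map $P \mapsto X^\top P X$ sends tangent vectors $U \mapsto X^\top U X$, and
\[
\tr{(X^\top P X)^{-1} X^\top U X (X^\top P X)^{-1} X^\top V X} = \tr{P^{-1}UP^{-1}V},
\]
so it is an isometry of the AIRM and hence preserves geodesic distance. That argument works directly from the metric's definition. Yours needs only elementary linear algebra once the distance formula is granted.
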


\begin{theorem}[Dense geometric mean]\label{thm:dense-geometric-mean}
Given $A,B \in \mathcal{S}_{++}^d$, their \emph{geometric mean} 
with respect to the affine-invariant Riemannian metric is the midpoint of the unique geodesic from $A$ to $B$, which is given by
\[
\spdmean{A}{B}
= A^{1/2} \, 
  \left(A^{-1/2} \, B \, A^{-1/2}\right)^{1/2}
  \, A^{1/2}.
\]
\end{theorem}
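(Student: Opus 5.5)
We have to show that the geodesic midpoint between $A$ and $B$ in the AIRM is $A^{1/2}(A^{-1/2}BA^{-1/2})^{1/2}A^{1/2}$, and that this geodesic is unique. The plan is to use affine invariance to move $A$ to the identity, solve the problem there, and then transport the answer back.

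\textbf{Step 1: the isometry.} For invertible $X$, the congruence $\phi_X(P) = X^\top P X$ maps $\mathcal{S}_{++}^d$ to itself. Its differential is $U \mapsto X^\top U X$. Substituting into the inner product gives
\[
\innerproduct{X^\top P X}{X^\top U X}{X^\top V X} = \tr{P^{-1}UP^{-1}V}.
\]
So $\phi_X$ is an isometry of the metric itself, which is slightly stronger than the distance-level statement of the Affine invariance corollary. Isometries map geodesics to geodesics and midpoints to midpoints. We take $X = A^{-1/2}$, so that $A \mapsto \Id$ and $B \mapsto C := A^{-1/2}BA^{-1/2}$.

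\textbf{Step 2: geodesics through the identity.} For symmetric $S$, consider the curve $\gamma(t) = \exp(tS)$. Its velocity is $\dot\gamma = S\gamma$, so its speed is $\tr{\gamma^{-1}\dot\gamma\,\gamma^{-1}\dot\gamma}^{1/2} = \norm{S}_F$, which is constant.

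Along this curve the distance formula gives $d(\Id, \gamma(t)) = \norm{tS}_F$. Hence the curve realizes distance equal to its length, and so it is a minimizing geodesic. It joins $\Id$ to $C$ when $S = \log C$, and its midpoint is $\gamma(1/2) = C^{1/2}$.

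\textbf{Step 3: uniqueness.} I would argue that $\mathcal{S}_{++}^d$ with the AIRM is a Hadamard manifold: it is complete, simply connected, and has nonpositive curvature. A lighter alternative avoids curvature entirely. The map $\exp$ from symmetric matrices onto $\mathcal{S}_{++}^d$ is a diffeomorphism. One then shows that the Riemannian exponential at $\Id$ equals the matrix exponential, by checking that the curves $\exp(tS)$ satisfy the geodesic equation $\ddot\gamma = \dot\gamma\gamma^{-1}\dot\gamma$.

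Either route shows that geodesics from $\Id$ are unique, and uniqueness then transfers back through $\phi_{A^{1/2}}$. I expect this step to be the main obstacle, because the distance formula alone gives only existence and length of a minimizer, not uniqueness. My first attempt would be to verify the geodesic equation directly and then invoke injectivity of the matrix exponential on symmetric matrices.

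\textbf{Step 4: map back.} Applying $\phi_{A^{1/2}}$ sends the midpoint $C^{1/2}$ to
\[
A^{1/2}\,(A^{-1/2}BA^{-1/2})^{1/2}\,A^{1/2},
\]
which is the stated formula.

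Since the geodesic is symmetric in its endpoints, $\spdmean{A}{B} = \spdmean{B}{A}$. As a sanity check, the formula satisfies $(\spdmean{A}{B})\,A^{-1}\,(\spdmean{A}{B}) = B$. This is consistent with \textsc{spdm} in \Cref{alg:spdm} and with the Riccati form of \Cref{eq:dense_equation}.
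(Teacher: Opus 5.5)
Your proof is correct. It takes a genuinely different route from the paper, which gives no derivation and simply cites Ando--Li--Mathias.

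You prove the geodesic-midpoint characterization directly, in three moves:
\begin{itemize}
\item the congruence $P\mapsto X^\top P X$ is an isometry of the metric tensor itself, which reduces everything to geodesics from $\Id$;
\item the curves $t\mapsto\exp(tS)$ satisfy $\ddot\gamma=\dot\gamma\gamma^{-1}\dot\gamma$;
\item bijectivity of the matrix exponential from symmetric to positive-definite matrices gives existence and uniqueness of the geodesic from $\Id$ to $C$.
\end{itemize}
Your route buys a self-contained proof of exactly the Riemannian claim in the statement. It also gives the whole geodesic $t\mapsto A^{1/2}(A^{-1/2}BA^{-1/2})^{t}A^{1/2}$, and hence weighted geometric means, at no extra cost. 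The citation buys brevity and access to the broader theory of the matrix geometric mean.

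One structural suggestion: lead with your second route in Step 3. Once $\exp_{\Id}$ is shown to be the matrix exponential, it is defined on the whole tangent space. Hopf--Rinow then gives a minimizing geodesic from $\Id$ to $C$, and uniqueness forces it to be the curve you found. So existence, uniqueness, minimality and the midpoint $C^{1/2}$ all follow at once. The paper's Distance theorem, which it states without proof and which is usually derived from the geodesics, then becomes a corollary rather than an input to Step 2.

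A small correction to your closing sanity check. \Cref{alg:spdm} returns $A^{-1/2}(A^{1/2}BA^{1/2})^{1/2}A^{-1/2}$, which solves $XAX=B$. That is $\spdmean{A^{-1}}{B}$, not $\spdmean{A}{B}$, so the consistency you note holds only after replacing $A$ by $A^{-1}$.
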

\begin{proof}
    See \citet{ando2004geometric}.
\end{proof}

\begin{corollary}[Diagonal geometric mean]\label{thm:diagonal-geometric-mean}
Given $a, b \in (0, \infty)^d$, let $A = \diag{a}$ and $B = \diag{b}$.  The geometric mean for diagonal matrices is
\[
\spdmean{\diag{a}}{\diag{b}}
= \diag{\sqrt{a \odot b}},
\]
where $\odot$ is elementwise multiplication and square root applies elementwise.
\end{corollary}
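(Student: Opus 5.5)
The plan is to specialize the formula of Theorem~\ref{thm:dense-geometric-mean} to commuting diagonal matrices. Nothing beyond the fact that functions of diagonal matrices act entrywise is needed.

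First, with $A = \diag{a}$ and $a_i > 0$, the principal square root is $A^{1/2} = \diag{\sqrt{a}}$ and $A^{-1/2} = \diag{a^{-1/2}}$. Each is the unique symmetric positive-definite matrix whose square is $A$ (respectively $A^{-1}$). This matches the eigendecomposition used in Algorithm~\ref{alg:spdm}, with $U_A = \Id$.

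Second, products of diagonal matrices are diagonal and are computed entrywise. Hence
\[
A^{-1/2} B A^{-1/2} = \diag{b \odot a^{-1}},
\]
where $a^{-1}$ is the elementwise reciprocal. This matrix is again diagonal with positive entries, so its principal square root is $\diag{\sqrt{b \odot a^{-1}}}$.

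Third, we conjugate back:
\[
A^{1/2}\,\diag{\sqrt{b\odot a^{-1}}}\,A^{1/2} = \diag{\sqrt{a}\odot\sqrt{b\odot a^{-1}}\odot\sqrt{a}} = \diag{\sqrt{a\odot b}},
\]
where the last step uses elementwise arithmetic on positive reals, $\sqrt{a_i}\sqrt{b_i/a_i}\sqrt{a_i} = \sqrt{a_i b_i}$.

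As a consistency check, $\diag{\sqrt{a\odot b}}$ is symmetric positive definite. It also satisfies the characterizing Riccati equation $X A^{-1} X = B$, since $(a_i b_i)/a_i = b_i$. This matches the form $\Sigma C^\alpha \Sigma = C^x$ used earlier, and the solution of this equation in $\mathcal{S}_{++}^d$ is unique. That uniqueness gives a second route to the result if one prefers not to track square roots.

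The only step needing care is the claim that the matrix square root in Theorem~\ref{thm:dense-geometric-mean} is the principal, symmetric positive-definite root. Only under that reading does the root of a positive diagonal matrix reduce to the entrywise positive root. I would justify it by citing uniqueness of the positive-definite square root. Everything else is routine.
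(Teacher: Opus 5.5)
Your proposal is correct and follows the same route as the paper, which substitutes the diagonal matrices into the formula of Theorem~\ref{thm:dense-geometric-mean} and notes that multiplication, square root, and inverse square root of positive diagonal matrices act entrywise. Your remarks on the principal square root and the Riccati-equation uniqueness check are sound additions that the paper's one-line proof leaves implicit.
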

\begin{proof}
   Diagonal multiplication, square root, and inverse square root apply elementwise to diagonal elements. 
\end{proof}

Given a data generating density $p(x \mid \theta)$ with support over $x \in \mathbb{R}^d$ and parameters $\theta$, the Fisher information metric is defined at a fixed parameter value $\theta$ by
$$
\mathcal{I}(\theta)
= -\int_{\mathbb{R}^d} p(x \mid \theta)\,\frac{\partial^2}{\partial \theta \, \partial \theta^\top} \, \log p(x \mid \theta) \, \textrm{d} x.
$$

\begin{theorem}[AIRM \& Fisher info. for Gaussians]
The affine-invariant Riemannian metric is equal to the Fisher information metric for the family of centered multivariate Gaussian distributions $\normal{0,\Sigma}$.
\end{theorem}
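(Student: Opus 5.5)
We compute the Fisher information of the family $\{\normal{0,\Sigma}\}$ directly, parameterized by $\Sigma \in \mathcal{S}_{++}^d$, and show that the resulting quadratic form on tangent vectors $U, V \in \mathbb{S}^d$ equals $\innerproduct{\Sigma}{U}{V} = \tr{\Sigma^{-1}U\Sigma^{-1}V}$, up to the conventional constant factor $\tfrac12$. The statement is meant up to that constant scaling, which does not change geodesics, midpoints, or the geometric mean of \Cref{thm:dense-geometric-mean}. We treat $\Sigma$ as the parameter $\theta$ and interpret $\mathcal{I}(\theta)$ as the bilinear form $\mathcal{I}_\Sigma(U,V)$ obtained from the second directional derivative along symmetric directions $U, V$.

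\textbf{Step 1: second directional derivative.} Write $\ell(\Sigma) = \log \normal{x \mid 0,\Sigma} = -\tfrac12\log\det\Sigma - \tfrac12 x^\top\Sigma^{-1}x + \text{const}$. Using $\partial_U \log\det\Sigma = \tr{\Sigma^{-1}U}$ and $\partial_U \Sigma^{-1} = -\Sigma^{-1}U\Sigma^{-1}$, the first derivative is
\[
\partial_U\ell = -\tfrac12\tr{\Sigma^{-1}U} + \tfrac12 x^\top\Sigma^{-1}U\Sigma^{-1}x .
\]
Differentiating again along $V$ gives
\[
\partial_V\partial_U\ell = \tfrac12\tr{\Sigma^{-1}V\Sigma^{-1}U} - \tfrac12 x^\top\bigl(\Sigma^{-1}V\Sigma^{-1}U\Sigma^{-1} + \Sigma^{-1}U\Sigma^{-1}V\Sigma^{-1}\bigr)x .
\]

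\textbf{Step 2: take the expectation.} Use $\mathbb{E}[x^\top B x] = \tr{B\Sigma}$ for $x\sim\normal{0,\Sigma}$. Each quadratic term then contributes $\tr{\Sigma^{-1}U\Sigma^{-1}V}$ by cyclicity of the trace. This gives
\[
\mathcal{I}_\Sigma(U,V) = -\mathbb{E}[\partial_V\partial_U\ell] = -\tfrac12\tr{\Sigma^{-1}U\Sigma^{-1}V} + \tr{\Sigma^{-1}U\Sigma^{-1}V} = \tfrac12\tr{\Sigma^{-1}U\Sigma^{-1}V},
\]
which is $\tfrac12\innerproduct{\Sigma}{U}{V}$.

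As a cross-check, one can compute instead $\mathbb{E}[\partial_U\ell\,\partial_V\ell]$ using Isserlis' formula $\mathrm{cov}(x^\top Ax, x^\top Bx) = 2\tr{A\Sigma B\Sigma}$ for symmetric $A, B$. This gives the same answer and confirms that the two standard definitions of the Fisher information agree here.

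\textbf{Main obstacle.} The delicate part is not the algebra but the bookkeeping. First, derivatives must be restricted to the symmetric subspace $\mathbb{S}^d$; otherwise a naive entrywise parameterization double counts off-diagonal entries and destroys the trace form. Second, the factor $\tfrac12$ has to be reconciled with the paper's definition of the AIRM. I would state the equality explicitly as "equal up to the constant factor $\tfrac12$" and note that constant rescalings of a Riemannian metric leave geodesics and the mean $\#$ unchanged, which is all that is used elsewhere in the paper.
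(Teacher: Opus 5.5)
Your proof is correct, and it differs from the paper, which gives no argument and only cites Equation~(23) of Malag\`o and Pistone. You compute directly. You take the second directional derivatives of $\log \normal{x \mid 0,\Sigma}$ along symmetric directions $U,V$ and apply $\mathbb{E}[x^\top B x] = \tr{B\Sigma}$. This gives $\tfrac12\tr{\Sigma^{-1}U\Sigma^{-1}V}$, and the Isserlis cross-check confirms that the outer-product form of the Fisher information gives the same value.

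The direct route makes the result self-contained. More importantly, it exposes what the one-line statement hides. With the paper's own definitions, the AIRM $\innerproduct{P}{U}{V} = \tr{P^{-1}UP^{-1}V}$ and $\mathcal{I}$ as the negative expected Hessian, the Fisher metric is $\tfrac12\innerproduct{\Sigma}{U}{V}$. So ``equal'' holds only up to that constant, and you are right to state it that way.

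Your remark that a constant rescaling of a metric leaves the Levi-Civita connection unchanged is what justifies the paper's use of the result. Geodesics and midpoints are therefore unchanged, so $\spdmean{A}{B}$ is the same for either metric. The citation buys brevity; your argument buys a checked statement with the correct normalization.
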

\begin{proof}
See Equation~(23) of \citet{malago2015information}.
\end{proof}

\section{A geometric view of Fisher HMC}
\label{appendix:transformed-hmc}

This appendix presents a coordinate-free formulation of preconditioning for HMC,
extending the usual Euclidean treatment to general smooth manifolds and to
nonlinear (non-affine) transformations. Such a formulation also naturally encompasses settings where the target or latent space has non-Euclidean geometry (for example, distributions on spheres, Stiefel manifolds, or other Riemannian manifolds), although we focus on the Euclidean case in the main text.

Let the \emph{target space} be a smooth orientable manifold $\T$ endowed with a
posterior volume form $p$ (the target distribution viewed as an $n$-form with $\int p = 1$).

We augment $\T$ with a corresponding manifold of momenta to the cotangent bundle $T^*\T$. Any Riemannian metric $g$ on $\T$ with volume form $\vol_g$ allows us to define the Hamiltonian on the cotangent bundle $T^*\T$ of positions and momenta as
\[
H_{g,p}(x, \rho)
= \frac{1}{2}\norm{\rho}_{g^{-1}(x)}^2
  - \log\left(\frac{p}{\vol_g}\right)(x),
\]
where $p / \vol_g \in \Omega^0(\T)$ is the density of $p$ with respect to $\vol_g$.

Identify $T_{(x,\rho)}(T^*\T)\cong T_x\T \oplus T_x^*\T$.
Then the Hamiltonian vector field $X_H$ has components
\[
    X_H(x,\rho)=\left(\dd_\rho H_{g,p}(x,\rho),\; -\dd_x H_{g,p}(x,\rho)\right).
\]

Alternatively, $X_H$ is the vector field that satisfies $\iota_{X_H}\omega = \dd H$, where $\omega$ is the canonical symplectic form on $T^*\T$.

The vector field $X_H$ induces the flow $\Phi_H^t$ and yields the following (exact) HMC algorithm: starting at position $x_0 \in \T$, sample momentum $\rho_0 \in T^*_{x_0}\T \sim \normal{0, g_{x_0}}$. Evolve the system to time $t$ to obtain $(x_1, \rho_1) = \Phi_H^t(x_0, \rho_0)$, and return the new position $x_1$.

The performance of the sampler critically depends on the choice of metric $g$.  
Suppose that we already know that a distribution $q$ (viewed again as a volume form) on another manifold $\A$, diffeomorphic to $\T$, can be sampled efficiently with a metric $\mu$ using the Hamiltonian $H_{\mu,q}$. In practice, this is often $\A = \R^n$ with $\mu$ the constant identity metric and $q$ the standard normal volume form.

A diffeomorphism $f \colon \A \to \T$ pushes the metric forward to $\T$ through $f_*\mu$.
This allows us to sample from $p$ using the Hamiltonian $H_{f_*\mu, p}$ on $\T$, or equivalently, $H_{\mu, f^*p}$ on $\A$.

Since $H_{\mu,q}$ generates an efficient flow through the vector field $X_{H_{\mu, q}}$, we would like to choose $f$ so that the vector fields $X_{H_{\mu, q}}$ and $X_{H_{\mu, f^*p}}$ are similar.

% Pulling both Hamiltonians back to $\A$, this leads to the optimization problem
% \[
% \hat{f}
% = \argmin_{f \ \text{diffeo}}
% \int_\A\norm*{X_{H_{\mu, q}} - X_{H_{\mu, f^*p}}}^2_\mu f^*p.
% \]
Minimizing the squared error between the two Hamiltonian vector fields
on $T^{*}\mathcal{A}$ (evaluated at $\rho=0$) leads to the optimization
problem 
\[
\hat{f}=\argmin_{f\ \mathrm{diffeo}}\int_{\mathcal{A}}\left\Vert X_{H_{\mu,q}}(x,0)-X_{H_{\mu,f^{*}p}}(x,0)\right\Vert _{\mu}^{2}\,f^{*}p.
\]
To evaluate this, note that the Hamiltonian vector field has components
$X_H(x,\rho) = (\dd_\rho H, -\dd_x H)$.
For $H_{\mu,\omega}(x,\rho) = \frac{1}{2}\norm{\rho}^2_{\mu^{-1}} - \log(\omega/\vol_\mu)(x)$,
we have $\dd_\rho H_{\mu,\omega} = \mu^{-1}\rho$ and $\dd_x H_{\mu,\omega} = -\dd \log(\omega/\vol_\mu)$.
Thus, the position component of the vector field is $\mu^{-1}\rho$ for both Hamiltonians (independent of the target distribution), while the momentum components differ:
\[
X_{H_{\mu, q}} = \left(\mu^{-1}\rho, \dd\log\left(\frac{q}{\vol_\mu}\right)\right), \quad
X_{H_{\mu, f^*p}} = \left(\mu^{-1}\rho, \dd\log\left(\frac{f^*p}{\vol_\mu}\right)\right).
\]
The difference lies entirely in the momentum component. Moreover, it is
independent of $\rho$, so although the choice to evaluate the squared
error at $\rho=0$ may initially seem ad hoc, the resulting objective is
unchanged. Computing the squared norm with respect to $\mu$ (where the 
position part uses $\mu$ and the momentum part uses $\mu^{-1}$):
\[
\norm{X_{H_{\mu, q}} - X_{H_{\mu, f^*p}}}^2_\mu = \norm{0}^2_\mu + \norm{\dd\log(q/f^*p)}^2_{\mu^{-1}} = \norm{\dd\log(f^*p/q)}^2_{\mu^{-1}}.
\]
Therefore,
\[
\hat{f}
= \argmin_{f \ \text{diffeo}}
\int\norm*{\dd \log\left(\tfrac{f^*p}{q}\right)}^2_{\mu^{-1}} \, f^*p,
\]
where $\|\cdot\|_{\mu^{-1}}$ denotes the norm on $1$-forms induced by $\mu$.

The resulting cost function is exactly the Fisher divergence from $f^*p$ to $q$:

\begin{definition}[Fisher Divergence]
Let $X$ be a manifold with metric $g$ and volume forms $\omega_1,\omega_2$ that are mutually absolutely continuous.  
The \emph{Fisher divergence} from $\omega_1$ to $\omega_2$ is
\[
\mathcal{F}_{g}(\omega_1,\omega_2)
\;=\;
\int_{X} \norm*{\dd\log\left(\frac{\omega_1}{\omega_2}\right)}_{g^{-1}}^{2}\omega_1.
\]
\end{definition}

\subsection{Leapfrog steps}

In practice, we usually cannot compute $\Phi_{H}^t$ directly. But if the Hamiltonian can be written as the sum of two functions $H = V + K$ such that $\Phi_{V}^t$ and $\Phi_{K}^t$ can be computed efficiently, we can use the leapfrog integrator to approximate 
\[
\Phi_{H}^\epsilon \approx \Phi_V^{\epsilon/2}\circ \Phi_{K}^\epsilon\circ\Phi_{V}^{\epsilon/2}.
\]
Different splits of $H$ are possible, but generally we choose $K = \inner{\rho}{\rho}_{g^{-1}} / 2$ and $V = - \log\left(\frac{p}{\vol_g}\right)$, such that $\Phi_K^t$ is the exponential map with respect to $g$, and $V$ depends only on $x$ and not on $\rho$, so that $\Phi_{V}^t$ is also easy to compute\footnote{Another possible split that has some nice properties in this context is $K = \inner{\rho}{\rho}_{g^{-1}} / 2 - \log(q/\vol_g)$, the Hamiltonian of the reference distribution $q$, and $V = \log(p/q)$. The first flow can be easily solved analytically if $q$ is the standard normal distribution. This way, the integration error will go to zero as the Fisher divergence between q and p goes to zero.}. \Cref{alg:transformed-leapfrog} shows how we can use this to compute approximate $\Phi_H^t$ in a reversible manner.

\begin{algorithm}
  \caption{\textsc{transformed-leapfrog}}
  \label{alg:transformed-leapfrog}
  \begin{algorithmic}
    \Require $
        \x: \text{draw},
        v: \text{velocity},
        L: \text{num. steps},
        \epsilon: \text{step size},
        f: \text{diffeomorphism}
    $
    \State $\x^{(0)} \gets \x,\quad v^{(0)} \gets v$
    \State $y \gets f^{-1}\left(\x^{(0)}\right)$
    \State $\delta \gets \ddy \log\left(\tfrac{f^*p}{\lambda_\A}\right)(y)$
        \Comment{evaluate score on $\A$}
    \For{$i = 1$ \textbf{to} $L$}
      \State $v^{(i+1/2)} \gets v^{(i)} - \tfrac{\epsilon}{2}\,\delta$
        \Comment{half-step velocity}
      \State $y \gets y + \epsilon\,v^{(i+1/2)}$
        \Comment{full-step position (mass matrix = $\Id$)}
      \State $\delta \gets \ddy \log\left(\tfrac{f^*p}{\lambda_\A}\right)(y)$
      \State $v^{(i+1)} \gets v^{(i+1/2)} - \tfrac{\epsilon}{2}\delta$
        \Comment{half-step velocity}
    \EndFor
    \State \Return $f(\drw{y}{L})$
  \end{algorithmic}
\end{algorithm}

\newpage

\section{Proofs}

\subsection{Minimization of Fisher divergence for affine transforms}
\label{proof:fisher_divergence_min}

\begin{theorem}
Let $\drw{x}{i}$ be draws from a distribution with density $p$ and $F_{A, \mu}\colon \A\to\T$ with $F_{A, \mu}(\y) = A\y - \mu$ a family of diffeomorphisms. Then the sample Fisher divergence
\[
    \fdivhat{p_\A(y; F_{A, \mu}), \normal{0, \Id}} = \frac{1}{n}\sum_{i} \norm{\drw{y}{i} + \drw{\beta}{i}}^2
\]
is minimal if
\[
    M\,\cov{\drw{\x}{i}}\,M = \cov{\drw{\alpha}{i}},\qquad\mu=\overline \x +M\overline\alpha,    
\]
where $M = (AA^\top)^{-1}$ and
\begin{align}
    \drw{\y}{i} &= F^{-1}(\drw{\x}{i}) = A^{-1}(\drw{x}{i} - \mu) \\
    \drw{\beta}{i} &= J_F^\top(\drw{\y}{i}) \, \alpha^{(i)} + \ddy \log \absdet{J_F(\drw{\y}{i})} = A^\top \drw{\alpha}{i}.
\end{align}
\end{theorem}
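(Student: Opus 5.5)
The plan is to reduce $\widehat{D}$ to an explicit function of the symmetric matrix $\Sigma = AA^\top = M^{-1}$ and the shift $\mu$. I would then minimize first over $\mu$ in closed form, and afterwards over $\Sigma$ via a matrix-calculus stationarity condition plus convexity. One notational caveat first: the computation below gives the optimal shift as $\overline{x} + AA^\top\overline{\alpha}$. This matches the diagonal case $\mu^* = \overline{x} + {\sigma^2}^*\odot\overline{\alpha}$, so I read the location condition in the statement with $AA^\top$ in the role written as $M$. Likewise, the covariance condition is to be read as $\Sigma\,\cov{\alpha}\,\Sigma = \cov{x}$, equivalently $M\,\cov{x}\,M = \cov{\alpha}$ with $M = \Sigma^{-1}$, as in \Cref{eq:dense_equation}.

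\textbf{Step 1 (expansion).} Write $v_i = \drw{x}{i} - \mu$. Since $\drw{y}{i} = A^{-1}v_i$ and $\drw{\beta}{i} = A^\top\drw{\alpha}{i}$, expanding the square gives
\[
\norm{A^{-1}v_i + A^\top \drw{\alpha}{i}}^2 = v_i^\top \Sigma^{-1} v_i + 2\, v_i^\top \drw{\alpha}{i} + \drw{\alpha}{i}{}^\top \Sigma\, \drw{\alpha}{i}.
\]
The middle term uses $(A^{-1})^\top A^\top = \Id$. This shows at once that $\widehat{D}$ depends on $A$ only through $\Sigma$, which is the fact cited in the main text. Averaging over $i$, $\widehat{D}$ is a quadratic function of $\mu$ whose Hessian is $2\Sigma^{-1} \succ 0$, so for fixed $\Sigma$ it is strictly convex in $\mu$. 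Setting the gradient $-2\Sigma^{-1}(\overline{x} - \mu) - 2\overline{\alpha}$ to zero gives $\mu = \overline{x} + \Sigma\overline{\alpha}$.

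\textbf{Step 2 (profiling out $\mu$).} Substitute $v_i = (\drw{x}{i} - \overline{x}) - \Sigma\overline{\alpha}$ and let $C_x = \cov{x}$, $C_\alpha = \cov{\alpha}$ and $C_{x\alpha}$ denote the (biased) sample covariances. The three groups of terms contribute:
\begin{itemize}
\item the quadratic term gives $\tr{\Sigma^{-1}C_x} + \overline{\alpha}^\top\Sigma\overline{\alpha}$;
\item the cross term gives $2\tr{C_{x\alpha}} - 2\overline{\alpha}^\top\Sigma\overline{\alpha}$;
\item the score term gives $\tr{\Sigma C_\alpha} + \overline{\alpha}^\top\Sigma\overline{\alpha}$.
\end{itemize}
The $\overline{\alpha}$ terms cancel, leaving the profiled objective
\[
f(\Sigma) = \tr{\Sigma^{-1}C_x} + \tr{\Sigma C_\alpha} + 2\tr{C_{x\alpha}}.
\]

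\textbf{Step 3 (minimizing over $\Sigma$).} On the open cone $\mathcal{S}^d_{++}$:
\begin{itemize}
\item $\Sigma \mapsto \tr{\Sigma C_\alpha}$ is linear;
\item $\Sigma \mapsto \tr{\Sigma^{-1}C_x}$ is convex, since $\Sigma \mapsto \Sigma^{-1}$ is operator convex and $C_x \succeq 0$.
\end{itemize}
Hence $f$ is convex and any stationary point is a global minimizer. The differential is
\[
\dd f = \tr{\bigl(C_\alpha - \Sigma^{-1}C_x\Sigma^{-1}\bigr)\dd\Sigma},
\]
so stationarity over symmetric perturbations is exactly $\Sigma^{-1}C_x\Sigma^{-1} = C_\alpha$, i.e.\ $M\,\cov{x}\,M = \cov{\alpha}$. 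Combined with Step 1 this gives the claimed pair of conditions. Because the joint minimum over $(\mu,\Sigma)$ equals the minimum of the profile $f$, the joint problem is settled too.

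The main obstacle I expect is existence and uniqueness when $C_x$ or $C_\alpha$ is singular, for instance with fewer draws than dimensions. In that case $f$ may not attain its infimum on the open cone, or may have a flat set of minimizers. My plan is to state the result as: the minimizers are exactly the solutions of the stationarity equation, whenever one exists. For the full-rank case I would exhibit the explicit solution $\Sigma = \spdmean{\cov{x}}{\cov{\alpha}^{-1}}$ from \Cref{thm:dense-geometric-mean}, check it by direct substitution, and get uniqueness from strict convexity of $\tr{\Sigma^{-1}C_x}$ when $C_x \succ 0$.
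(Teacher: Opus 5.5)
Your argument is correct, and it proves more than the paper's proof does, by a different route.

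The paper keeps $A$ as the variable. It writes $\widehat{D} = \frac{1}{n}\norm{A^\top S + A^{-1}(X-\mu\ones^\top)}_F^2$ and sets the $\mu$-differential to zero. This gives $\mu = \overline{x} + M^{-1}\overline{\alpha}$, which confirms the $M$ versus $M^{-1}$ slip you flagged in the statement. It then differentiates through both $A^\top$ and $A^{-1}$, and uses $\ones^\top\tilde X^\top = \ones^\top\tilde S^\top = 0$ to collapse the resulting matrix equation to $\tilde S\tilde S^\top = M\tilde X\tilde X^\top M$. That is only a first-order condition; nothing in the paper's argument shows the stationary point is a minimum.

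You instead pass to $\Sigma = AA^\top$. This is harmless, since the objective depends on $A$ only through $AA^\top$ and every element of $\mathcal{S}^d_{++}$ has this form. You then profile out $\mu$ in closed form and reach the reduced objective $\tr{\Sigma^{-1}C_x} + \tr{\Sigma C_\alpha}$ plus a constant.
\begin{itemize}
\item Convexity of this function upgrades stationarity to global optimality, which is what ``is minimal'' actually asserts.
\item Strict convexity when $C_x \succ 0$ gives uniqueness, with minimizer $\spdmean{\cov{x}}{\cov{\alpha}^{-1}}$ when both covariances are full rank.
\item The reduced form shows directly why the infimum can fail to be attained when a sample covariance is singular.
\item Adding the regularizer $\gamma(\tr{\Sigma} + \tr{\Sigma^{-1}})$ simply replaces $C_x, C_\alpha$ by $C_x + \gamma\Id, C_\alpha + \gamma\Id$, which yields \eqref{eq:dense_equation_reg} at once.
\end{itemize}
The paper's direct computation in $A$ avoids the reparametrization step. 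It pays for this with heavier algebra and no sufficiency argument.
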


\begin{proof}
Let $X=[\drw{x}{1},\dots,\drw{\x}{n}] \in R^{d\times n}$ and $S = [\drw{\alpha}{1},\dots,\drw{\alpha}{n}]$. Let $\ones=(1,\dots,1)^T\in\mathbb{R}^n$, then
\begin{align}
    D_\text{F} &= \frac{1}{n}\norm{A^\top S + A^{-1}(X-\mu \ones^\top)}^2_F,
\end{align}
where $\norm{\cdot}_F$ is the Frobenius norm.

\proofpart{Optimize with respect to $\mu$}
We get
\begin{align}
    \dd_\mu D_\text{F} &\propto
    \tr{(A^\top S + A^{-1}(X-\mu \ones^\top)^\top\dd_\mu (A^\top S + A^{-1}(X-\mu \ones^\top)}\\
    &= \tr{
        -(S^\top A + (X^\top - \ones\mu^\top) A^{-\top})A^{-1}\dd\mu\ones^\top
    }\\
    &= \tr{
        (\ones^\top\ones\mu^\top M - \ones^\top S^\top - \ones^\top X^\top M)\dd\mu
    }.
\end{align}

This is equal the constant zero function iff
\begin{gather}
(\ones^\top\ones\mu^\top M - \ones^\top S^\top - \ones^\top X^\top M) = 0 \\
\shortintertext{or}
\mu = \frac{1}{n}(X\ones + M^{-1} S\ones) = \overline{x} + M^{-1}\overline{\alpha},
\end{gather}
where $\overline{x}$ and $\overline{\alpha}$ are the sample means of $\drw{x}{i}$ and $\drw{\alpha}{i}$ respectively.

\proofpart{Optimizing with respect to $A$}
Write $\tilde X = X-\overline \x\ones^\top$ and $\tilde S = S-\overline\alpha\,\ones^\top$ for the centered samples and scores.  Then
\begin{align*}
\dd_A \fdivhatOp
&\propto
  \tr{
    \bigl(A^\top S + A^{-1}(X-\mu\ones^\top)\bigr)^\top
    \bigl(\dd A^\top S - A^{-1}\dd AA^{-1}(X-\mu\ones^\top)\bigr)
    }\\
&= 
    \begin{multlined}[t]
    \tr{
        (A^\top S + A^{-1}(\tilde X - M^{-1}\overline \alpha\ones^\top))S^\top\dd A
    }\\
    + \tr{
        A^{-1}(M^{-1}\overline\alpha\,\ones^\top - \tilde X)\,
        (A^\top \alpha + A^{-1}(\tilde X - M\overline\alpha\ones^\top))^\top A^{-1} \dd A
    }.
    \end{multlined}
\end{align*}
Setting $\dd \fdivhatOp=0$ for all $\dd A$ yields the matrix equation
\begin{multline}
0 
= \bigl(A^\top S + A^{-1}(\tilde X - M^{-1}\overline\alpha\ones^\top)\bigr)S^\top\\
+A^{-1}(M^{-1}\overline\alpha\ones^\top - \tilde X)
  \bigl(A^\top S + A^{-1}(\tilde X - M^{-1}\overline\alpha\ones^\top)\bigr)^\top A^{-1}.
\end{multline}

Using $\ones^\top\tilde X^\top=\ones^\top\tilde S^\top=0$ and rearranging gives
\[
0
= A^\top\tilde S\tilde S^\top + A^{-1}\tilde XS^\top
-A^{-1}\tilde X \tilde S^\top
-A^{-1}\tilde X \tilde X^\top\,M,
\]
which simplifies to the condition
\[
\tilde S\tilde S^\top =M\tilde X\tilde X^\top M
\]
\end{proof}

\subsection{Dependence on \texorpdfstring{$AA^\top$}{A times A transposed}}
\begin{theorem}
\label{proof:symmetric-matrix-dependence}
    Let
\[
\fdiv{p_\Phi}{q} = \frac{1}{N} \sum_{i=1}^N \left\| A^\top \alpha_i + A^{-1}(\x_i - \mu) \right\|^2.
\]
Then \( \fdiv{p_\Phi}{q} \) depends only on the symmetric matrix \( AA^\top \).
\end{theorem}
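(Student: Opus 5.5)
We will expand each squared norm and check that every term depends on $A$ only through $\Sigma := AA^\top$ (and its inverse $M = \Sigma^{-1}$). Write $v_i = \x_i - \mu$. Then
\[
\left\| A^\top \alpha_i + A^{-1} v_i \right\|^2
= \alpha_i^\top A A^\top \alpha_i + 2\,\alpha_i^\top A A^{-1} v_i + v_i^\top A^{-\top} A^{-1} v_i .
\]
We will identify the three pieces in turn. The first is $\alpha_i^\top \Sigma\, \alpha_i$. The cross term collapses to $2\,\alpha_i^\top v_i$, because $AA^{-1}=\Id$; it does not depend on $A$ at all. For the last term we use $A^{-\top}A^{-1} = (AA^\top)^{-1} = \Sigma^{-1}$, so it equals $v_i^\top \Sigma^{-1} v_i$.

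Summing over $i$ and dividing by $N$ gives
\[
\fdiv{p_\Phi}{q} = \frac{1}{N}\sum_{i=1}^N \left( \alpha_i^\top \Sigma\, \alpha_i + 2\,\alpha_i^\top(\x_i-\mu) + (\x_i-\mu)^\top \Sigma^{-1}(\x_i-\mu)\right).
\]
This is manifestly a function of $\Sigma = AA^\top$ and $\mu$ only. In particular, replacing $A$ by $AO$ for any orthogonal $O$ leaves the value unchanged. Any invertible $A'$ with $A'A'^\top = AA^\top$ is of this form, since $O = A^{-1}A'$ is orthogonal. Hence one may take $A$ to be the symmetric square root $\Sigma^{1/2}$ or the Cholesky factor without loss of generality, which is the claim used in \Cref{sec: dense_case}.

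There is no real obstacle here. The only point requiring care is the identity $A^{-\top}A^{-1} = (AA^\top)^{-1}$ together with the cancellation in the cross term. As a tidy way to present the argument, we will also note the Frobenius form $\frac{1}{N}\bigl(\tr{S^\top \Sigma S} + 2\tr{S^\top \tilde V} + \tr{\tilde V^\top \Sigma^{-1}\tilde V}\bigr)$, where $S$ collects the scores and $\tilde V$ the centered draws $\x_i-\mu$, matching the notation of \Cref{proof:fisher_divergence_min}.
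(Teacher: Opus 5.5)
Your proposal is correct and follows the paper's proof essentially verbatim: expand the square, cancel the cross term via $AA^{-1} = \Id$, and rewrite the last term using $A^{-\top}A^{-1} = (AA^\top)^{-1}$. Your added remark that any invertible $A'$ with $A'A'^\top = AA^\top$ has the form $AO$ with $O = A^{-1}A'$ orthogonal is also correct. It makes explicit the ``without loss of generality'' reduction that the paper invokes in the dense case.
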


\begin{proof}
Expanding the squared norm gives
\[
\begin{aligned}
\left\| A^\top \alpha_i + A^{-1} (\x_i - \mu) \right\|^2
&= \left( A^\top \alpha_i + A^{-1} (\x_i - \mu) \right)^\top \left( A^\top \alpha_i + A^{-1} (\x_i - \mu) \right) \\
&= \alpha_i^\top A A^\top \alpha_i
+ 2 \alpha_i^\top A A^{-1} (\x_i - \mu)
+ (\x_i - \mu)^\top (A^{-1})^\top A^{-1} (\x_i - \mu)
\end{aligned}
\]
Note that $A A^{-1} = I_n \quad \text{and} \quad (A^{-1})^\top A^{-1} = (AA^\top)^{-1}$. Therefore the expression depends on $A$ only through the value of $AA^\top$.
\end{proof}

\subsection{Minimization of KL divergence for affine transforms}
\label{proof:kl-divergence-min}

\begin{theorem}
Let $\drw{x}{i}$ be draws from the distribution with density $p_\T$, and let $F_{\mu, A}\colon \A\to\T$ be the affine transformations $F_{\mu, A}(x)= Ax+\mu$. Let $p_\A$ be the density of the distribution of $p_\T$ pulled back to $\A$.

Then
\[
\argmin \kldivhat{p_\A, \normal{0, \Id}} = (\overline{x}, \cov{\drw{x}{i}}),
\]
where $\overline{x} = \frac{1}{n}\sum\drw{x}{i}$ and $\cov{\drw{x}{i}}$ is the maximum likelihood covariance estimate.
\end{theorem}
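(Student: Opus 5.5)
We write the sample KL objective explicitly as a function of $(\mu, A)$ and minimize it in closed form, as in the Fisher case of \Cref{proof:fisher_divergence_min}. Since $p_\A(y) = p_\T(Ay+\mu)\,\absdet{A}$, we have
\[
\kldiv{p_\A}{\normal{0,\Id}} = \int p_\A \log p_\A - \int p_\A \log \normal{y \mid 0, \Id}.
\]
The first term equals $\int p_\T \log p_\T + \log \absdet{A}^{-1}$ after the change of variables $x = Ay+\mu$, because $p_\T \log p_\T$ integrates the same way and the Jacobian contributes $\log\absdet{A}$ with a sign. Its $p_\T$ part does not depend on $(\mu, A)$, so it can be dropped. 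The second term, after substituting $y = A^{-1}(x-\mu)$, becomes $\tfrac12 \mathbb{E}_{p_\T}\norm{A^{-1}(x-\mu)}^2$ plus a constant. Replacing expectations by sample averages over the draws $\drw{x}{i}$, the estimated objective is, up to constants,
\[
\kldivhat{p_\A,\normal{0,\Id}} \doteq -\log\absdet{A} + \frac{1}{2n}\sum_{i=1}^n (\drw{x}{i}-\mu)^\top (AA^\top)^{-1}(\drw{x}{i}-\mu).
\]
As in \Cref{proof:symmetric-matrix-dependence}, this depends on $A$ only through $\Sigma = AA^\top$, since $\log\absdet{A} = \tfrac12\log\det\Sigma$. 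The objective is then exactly the negative Gaussian log-likelihood of the draws under $\normal{\mu,\Sigma}$.

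The minimization splits into two steps. First, for fixed $\Sigma$, the objective is a strictly convex quadratic in $\mu$. Setting the gradient $-\Sigma^{-1}\sum_i(\drw{x}{i}-\mu)$ to zero gives $\mu = \overline{x}$, independently of $\Sigma$.

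Second, substituting $\mu=\overline{x}$ and writing $C = \cov{\drw{x}{i}}$ for the maximum likelihood covariance, the objective becomes
\[
\tfrac12\bigl(\log\det\Sigma + \tr{\Sigma^{-1}C}\bigr).
\]
Parametrize by $P = \Sigma^{-1}$, so the objective is $\tfrac12(-\log\det P + \tr{PC})$, which is strictly convex in $P$ on $\mathcal{S}^d_{++}$. Its differential is $\tfrac12\tr{(C - P^{-1})\dd P}$, which vanishes for all $\dd P$ iff $\Sigma = C$. Equivalently, one can write the objective as $\tfrac12\sum_j(\nu_j - \log\nu_j) + \text{const}$, where $\nu_j$ are the eigenvalues of $C^{1/2}\Sigma^{-1}C^{1/2}$. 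Since $t-\log t$ is uniquely minimized at $t=1$, the minimum is attained exactly when $\Sigma^{-1}=C^{-1}$.

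The main obstacle is that this argument needs $C$ to be positive definite, i.e.\ $n > d$ with draws in general position. Otherwise the objective is unbounded below: $\Sigma$ can shrink along the null space of $C$. I would state this as an assumption, or note that the $\argmin$ is understood with $\cov{\drw{x}{i}}$ nonsingular. A second point to handle is that the statement identifies the minimizer with $(\overline{x}, \cov{\drw{x}{i}})$ in terms of $(\mu, AA^\top)$, so I would remark that $A$ itself is determined only up to right multiplication by an orthogonal matrix. This is harmless because the mass matrix is $M = (AA^\top)^{-1} = \cov{\drw{x}{i}}^{-1}$. The restricted diagonal claim, $M^{-1}=\diagMatrix{\var{x}}$, follows from the same computation coordinatewise.
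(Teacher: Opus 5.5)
Your route is the paper's. You reduce the sample KL objective to the Gaussian negative log-likelihood of the draws under $\normal{\mu,\Sigma}$ with $\Sigma = AA^\top$; the paper gets there in one line by citing parametrization invariance of KL, you by an explicit change of variables. You then set $\mu=\overline{x}$ and solve the stationarity condition in $\Sigma$.

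There is one sign slip you need to fix. Since $\log p_{\A}(y) = \log p_{\T}(Ay+\mu) + \log\absdet{A}$, the negative-entropy term is $\int p_{\T}\log p_{\T} + \log\absdet{A}$. The objective therefore starts with $+\log\absdet{A} = +\tfrac12\log\det\Sigma$, not $-\log\absdet{A}$. This matters: with your displayed sign, the profiled objective would be $-\tfrac12\log\det\Sigma + \tfrac12\tr{\Sigma^{-1}C}$. That tends to $-\infty$ along $\Sigma = tI$, $t\to\infty$, so there would be no minimizer at all. Your later expression $\tfrac12\bigl(\log\det\Sigma + \tr{\Sigma^{-1}C}\bigr)$ has the correct sign, as does the identification with the negative log-likelihood. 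From that point on your argument is sound.

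Your additions go beyond the paper's proof, which only sets first derivatives to zero.
\begin{enumerate}
\item Strict convexity in $P=\Sigma^{-1}$, or equivalently the $\sum_j(\nu_j-\log\nu_j)$ argument, certifies that the stationary point is the unique global minimum.
\item The requirement that $\cov{\drw{x}{i}}$ be nonsingular is genuinely needed; otherwise the infimum is $-\infty$, as you note.
\item Only $AA^\top$, not $A$ itself, is identified.
\end{enumerate}
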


\begin{proof}
Because the KL divergence is parametrization independent, we get
\[
\kldivhat{p_\A, \normal{0, \Id}} = \kldivhat{p_\T, \normal{\mu, \Sigma}},
\]
where $\Sigma = AA^\top$.

\[
\kldivhat{p_\T, \normal{\mu, \Sigma}} = \sum_{i=1}^n \log p(\drw{x}{i}) + \frac{1}{2} \log \det\Sigma + \frac{1}{2} (\drw{x}{i} - \mu)^\top \Sigma^{-1}(\drw{x}{i} - \mu)
\]

\proofpart{Minimize with respect to $\mu$}

\[
\dd_\mu\kldivhat{p_\T, \normal{\mu, \Sigma}} = \sum_{i=1}^n (\drw{x}{i} - \mu)^\top \Sigma^{-1}\dd \mu
\]
is constant zero if
\[
\sum (\drw{x}{i} - \mu)^\top \Sigma^{-1} = 0,
\]
or
\[
\mu = \frac{1}{n}\sum \drw{x}{i} = \overline{x}.
\]

\proofpart{Minimize with respect to $\Sigma$}

We compute the differential of the KL divergence with respect to $\Sigma$:
\[
\dd_\Sigma \kldivhat{p_\T, \normal{\mu, \Sigma}} = \frac{1}{2}\sum_{i=1}^n \tr{\Sigma^{-1}\dd \Sigma} - (\drw{x}{i} - \mu)^\top \Sigma^{-1}\dd \Sigma \Sigma^{-1} (\drw{x}{i} - \mu).
\]
Setting $\drw{\tilde{x}}{i} = \drw{x}{i} - \mu = \drw{x}{i} - \overline{x}$, and using the cyclic trace property, we get
\[
\dd_\Sigma \kldivhat{p_\T, \normal{\mu, \Sigma}} = \frac{1}{2}\sum \tr{(\Sigma^{-1} - \Sigma^{-1}\drw{\tilde{x}}{i} \left({\drw{\tilde{x}}{i}}\right)^{\top} \Sigma^{-1})\dd \Sigma}.
\]
This is constant zero if
\[
\sum \Sigma^{-1} - \Sigma^{-1}\drw{\tilde{x}}{i} \left( {\drw{\tilde{x}}{i}} \right)^\top\Sigma^{-1} = 0,
\]
or if
\[
\Sigma = \frac{1}{n}\sum_i \drw{\tilde{x}}{i} \left( {\drw{\tilde{x}}{i}} \right)^\top = \cov{\drw{x}{i}}
\]

\end{proof}

\subsection{Exact recovery for multivariate normal distributions}
\label{appendix:mvnorm}

Let $X, S\in \R^{d\times k}$ be the matrix of centered draws and corresponding scores of $\normal{\mu, \Sigma}$. Let $A$ be a solution to $AA^\top SS^\top AA^\top = CC^\top$. We will show that $AA^\top = \Sigma$ restricted to the column space of $C$ or that $Mv = \Sigma v$ for all $v\in \text{col}(C)$.

\begin{proof}
A little algebra shows that $S = -\Sigma^{-1}C$. Set $N = \Sigma^{-1/2}AA^\top\Sigma^{-1/2}$ and $P = \Sigma^{-1/2}CC^\top\Sigma^{-1/2}$. Then our equation becomes

\[
NPN = P.
\]

Let
\[
P = U D' U^\top =
\begin{bmatrix}Q & Q_\perp\end{bmatrix}
\begin{bmatrix}D & 0 \\ 0 & 0\end{bmatrix}
\begin{bmatrix}Q^\top \\ Q_\perp^\top\end{bmatrix}
\]
be the singular value decomposition of $P$, such that $D \succ 0$, and write

\[
U N U^\top
= \begin{bmatrix} A & B \\ B^\top & E\end{bmatrix}.
\]

We get 

\[
\begin{bmatrix} A & B \\ B^\top & E\end{bmatrix}
\begin{bmatrix}D & 0 \\ 0 & 0\end{bmatrix}
\begin{bmatrix} A & B \\ B^\top & E\end{bmatrix}
=
\begin{bmatrix}
ADA & ADB \\
B^\top D A & B^\top D B
\end{bmatrix} = \begin{bmatrix}
D & 0 \\ 0 & 0 
\end{bmatrix}.
\]

The lower right block gives us $B^\top D B = 0$, and because $D \succ 0$, this implies $B = 0$.
The upper left block gives us $ADA = D$. But because $D\succ 0$ and $A \succeq 0$ this implies $A = \Id$.

Therefore
\[
UNU^\top
= \begin{bmatrix}
\Id & 0 \\
0 & E
\end{bmatrix},
\]

so $N$ acts like the identity on $\text{col}(Q) = \text{col}(P)$.

Now, let $v\in \text{col}(C)$. Since $\text{col}(P) = \text{col}(\Sigma^{-1/2}C)$, we have $\Sigma^{-1/2}v \in \text{col}(P)$. Therefore, $N\Sigma^{-1/2}v = \Sigma^{-1/2}v$. Multiplying by $\Sigma^{1/2}$ gives the desired $AA^\top v = \Sigma v$.
\end{proof}

\subsection{Minimum of low-rank transformation}
\label{appendix:min-lowrank}

Let $C \in \mathbb{R}^{d \times k}$ be the matrix of centered draws, $S \in \mathbb{R}^{d \times k}$ be the matrix of centered scores, and $M \in \mathbb{R}^{d \times d}$ be the symmetric, positive-definite mass matrix. 

Given a regularization parameter $\gamma \in \mathbb{R}^+$, the transformation is optimal if it satisfies the condition:
$$M (\gamma I + CC^\top) M = \gamma I + SS^\top$$
We define an orthogonal matrix $Q = [Q_{2k}, Q_r]$ such that $Q Q^\top = I$, where $Q_{2k} \in \mathbb{R}^{d \times 2k}$ spans the joint column space of $C$ and $S$. By construction, the orthogonal complement $Q_r$ satisfies
$$Q_r^\top C = 0 \quad \text{and} \quad Q_r^\top S = 0$$
To find the optimal $M$, we project the optimality condition onto the basis $Q$ by pre-multiplying by $Q^\top$ and post-multiplying by $Q$:
$$Q^\top M (\gamma I + CC^\top) M Q = Q^\top (\gamma I + SS^\top) Q$$
Because $Q Q^\top = I$, we can insert the identity between the terms on the left side:
$$(Q^\top M Q) \big( \gamma I + Q^\top CC^\top Q \big) (Q^\top M Q) = \gamma I + Q^\top SS^\top Q$$
Let $\tilde{M} = Q^\top M Q$ be the mass matrix in the projected space. We partition $\tilde{M}$ into blocks corresponding to $Q_{2k}$ and $Q_r$:
$$\tilde{M} = \begin{bmatrix} M_{2k} & M_{12} \\ M_{12}^\top & M_r \end{bmatrix}$$
Because $Q_r$ is orthogonal to $C$ and $S$, the projected covariance matrices are strictly block-diagonal. Letting $C_{2k} = Q_{2k}^\top C$ and $S_{2k} = Q_{2k}^\top S$, we have:
$$\begin{bmatrix} M_{2k} & M_{12} \\ M_{12}^\top & M_r \end{bmatrix} \begin{bmatrix} \gamma I + C_{2k}C_{2k}^\top & 0 \\ 0 & \gamma I \end{bmatrix} \begin{bmatrix} M_{2k} & M_{12} \\ M_{12}^\top & M_r \end{bmatrix} = \begin{bmatrix} \gamma I + S_{2k}S_{2k}^\top & 0 \\ 0 & \gamma I \end{bmatrix}$$
For brevity, let $D_C = \gamma I + C_{2k}C_{2k}^\top$ and $D_S = \gamma I + S_{2k}S_{2k}^\top$. Expanding the left side via block-matrix multiplication yields:
$$\begin{bmatrix} M_{2k} D_C M_{2k} + \gamma M_{12} M_{12}^\top & M_{2k} D_C M_{12} + \gamma M_{12} M_r \\ M_{12}^\top D_C M_{2k} + \gamma M_r M_{12}^\top & M_{12}^\top D_C M_{12} + \gamma M_r^2 \end{bmatrix} = \begin{bmatrix} D_S & 0 \\ 0 & \gamma I \end{bmatrix}$$
Equating the corresponding blocks gives us a system of equations. From the top-right block, we have:
$$M_{2k} D_C M_{12} + \gamma M_{12} M_r = 0$$
Because $M$ is symmetric and positive-definite, its principal submatrices $M_{2k}$ and $M_r$ are also symmetric and positive-definite. Furthermore, $D_C$ is strictly positive-definite ($\gamma > 0$). Therefore, the product $M_{2k} D_C$ has strictly positive eigenvalues. The equation above is a homogeneous continuous Sylvester equation of the form $A X + X B = 0$, where $A = M_{2k} D_C$, $X = M_{12}$, and $B = \gamma M_r$. Because $A$ and $B$ both have strictly positive eigenvalues, their spectra are disjoint, meaning the unique solution to this Sylvester equation is the zero matrix:
$$M_{12} = 0$$
Substituting $M_{12} = 0$ into the bottom-right block equation yields:
$$\gamma M_r^2 = \gamma I \implies M_r^2 = I$$
Since $M_r$ must be positive-definite, the unique solution is $M_r = I$. This proves that the optimal transformation acts purely as the identity on the orthogonal complement $Q_r$. Finally, substituting $M_{12} = 0$ into the top-left block gives the reduced optimality condition:
$$M_{2k} (\gamma I + C_{2k}C_{2k}^\top) M_{2k} = \gamma I + S_{2k}S_{2k}^\top$$
Solving the $\mathcal{O}(d^3)$ dense optimality condition is mathematically equivalent to solving the exact same problem in the $\mathcal{O}(k^3)$ projected subspace and padding the remaining dimensions with the identity matrix.

\subsection{Minimum of diagonal transformation}
\label{appendix:diag-proof}
\begin{proof}
Let $y = D^{-1}(x - \mu)$ be the affine transformation where $D = \text{diag}(\sigma)$. The empirical Fisher divergence between the transformed distribution and the standard normal $\mathcal{N}(0, I)$ is
$$\widehat{D}(\mu, D) = \frac{1}{n} \sum_{i=1}^n \left\| D \alpha_i + D^{-1}(x_i - \mu) \right\|^2.$$
Setting the gradient with respect to the shift $\mu$ to zero yields
$$\nabla_\mu \widehat{D} = -2 D^{-1} \left( D \overline{\alpha} + D^{-1}\overline{x} - D^{-1}\mu \right) = 0 \implies \mu^* = \overline{x} + D^2 \overline{\alpha}.$$
Because $D^2 = \text{diag}(\sigma^2)$, this is equivalent to $\mu^* = \overline{\x} + {\sigma^2}^* \odot \overline{\alpha}$. Substituting $\mu^*$ into the objective centers the draws and scores, $\tilde{x}_i = x_i - \overline{x}$ and $\tilde{\alpha}_i = \alpha_i - \overline{\alpha}$, decoupling the objective across the $d$ dimensions. For each dimension $j$, letting $v_j = \sigma_j^2$, we minimize the quantity $$v_j \text{var}(\alpha_j) + v_j^{-1} \text{var}(x_j) + 2\text{cov}(x_j, \alpha_j).$$
Setting the derivative with respect to $v_j$ to zero gives
$$\text{var}(\alpha_j) - v_j^{-2} \text{var}(x_j) = 0 \implies v_j = \sqrt{\frac{\text{var}(x_j)}{\text{var}(\alpha_j)}}.$$
The affine-invariant geometric mean of two positive scalars is $a \,\#\, b = \sqrt{ab}$. Recognizing that the marginal variances are exactly the diagonals of the respective empirical covariance matrices, rewriting the element-wise optimum $v_j$ in vector form directly yields
$${\sigma^2}^* = \invdiag{\cov{\x}} \,\#\ \left( \invdiag{\cov{\alpha}}\right)^{-1}.$$
\end{proof}
\newpage

\section{Adaptation online mean and variance algorithm}

\subsection{Welford's algorithm}
\label{appendix:welford}

Efficient, numerically stable, single-pass estimates of the sample mean and
(unbiased) sample variance via Welford’s algorithm.

\begin{algorithm}[H]
  \caption{\textsc{welford-update}}
  \begin{algorithmic}
  \Require $n: \textrm{iteration},\;\overline{x}: \textrm{running mean}, x: \textrm{new draw},\; M_2: \textrm{running sum sq. diff.}$
      \State $n \gets n + 1$
      \State $\delta \gets x - \overline{x}$
      \State $\overline{x} \gets \overline{x} + \delta / n$
      \State $M_2 \gets M_2 + \delta \cdot (x - \overline{x})$
      \State \Return $\bigl(\overline{x}, \; M_2/(n-1)\bigr)$ \Comment{Mean and unbiased variance}
  \end{algorithmic}
\end{algorithm}

Let $\overline{x}_n$ be the mean after $n$ observations and $M_{2,n}$ the
corresponding accumulated sum of squared deviations.  When $x_{n+1}$ arrives,
\begin{align*}
\overline{x}_{n+1} &= \overline{x}_n + \frac{x_{n+1}-\overline{x}_n}{n+1}, \\[4pt]
M_{2,n+1} &= M_{2,n} + (x_{n+1}-\overline{x}_n)(x_{n+1}-\overline{x}_{n+1}).
\end{align*}

\subsection{Foreground-background modification}
\label{appendix:foreground-background}
The removal of old draws from the Welford online estimator of draw and gradient variances is done by maintaining two parallel estimates, a ``foreground'' and ``background'', which are stagnated to preserve the estimate when old draws are wiped. The foreground estimate is what the sampler reads from at every iteration. The background estimator, while using the same update logic, maintains a fresher estimate based on a shorter window of recent draws, and periodically hands off its fresher estimate to the foreground before resetting. The foreground estimator then builds on this estimate until being replaced by a new one from the background. A diagram of this scheme is shown in \Cref{fig:adapt_fig}.

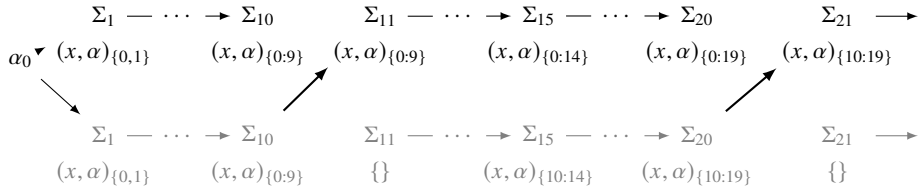
\begin{figure}[H]
\centering
\begin{tikzpicture}[
    scale=0.8,
    transform shape,
    every node/.style={
      draw,
      rectangle,
      align=center
    },
    >=latex,
  ]
  % bottom parameter
  \node[draw=none] (x) {$\alpha_0$};

  % background column
  \node (B1) [draw=none, text=gray, label={[text=gray]below:{$(\x, \alpha)_{\{0,1\}}$}}, below right=10mm of x] {$\Sigma_1$};
  \node[draw=none, text=gray] (D1) [right=5mm of B1] {$\dots$};
  \node (B2) [draw=none, text=gray, label={[text=gray]below:{$(\x, \alpha)_{\{0:9\}}$}}, right=5mm of D1] {$\Sigma_{10}$};
  \node (B3) [draw=none, text=gray, label={[text=gray]below:{$\{\}$}}, right=12mm of B2] {$\Sigma_{11}$};
  \node[draw=none, text=gray,] (D2) [right=5mm of B3] {$\dots$};
  \node (B4) [draw=none, text=gray, label={[text=gray]below:{$(\x, \alpha)_{\{10:14\}}$}}, right=5mm of D2] {$\Sigma_{15}$};
  \node[draw=none, text=gray] (D3) [right=5mm of B4] {$\dots$};
  \node (B5) [draw=none, text=gray, label={[text=gray]below:{$(\x, \alpha)_{\{10:19\}}$}}, right=5mm of D3] {$\Sigma_{20}$};
  \node (B6) [draw=none, text=gray, label={[text=gray]below:{${\{\}}$}}, right=15mm of B5] {$\Sigma_{21}$};
  \node[draw=none] (B7) [right=of B6]   {};

  % foreground column
  \node (F1) [draw=none, label=below:{$(\x, \alpha)_{\{0,1\}}$}, above=14mm of B1]  {$\Sigma_1$};
  \node (DD1) [draw=none, right=5mm of F1] {$\dots$};
  \node (F2) [draw=none, label=below:{$(\x, \alpha)_{\{0:9\}}$}, above=14mm of B2]  {$\Sigma_{10}$};
  \node (F3) [draw=none, label=below:{$(\x, \alpha)_{\{0:9\}}$}, above=14mm of B3]  {$\Sigma_{11}$};
  \node[draw=none] (DD2) [right=5mm of F3] {$\dots$};
  \node (F4) [draw=none, label=below:{$(\x, \alpha)_{\{0:14\}}$}, above=14mm of B4]  {$\Sigma_{15}$};
  \node (DD3) [draw=none,right=5mm of F4]  {$\dots$};
  \node (F5) [draw=none, label=below:{$(\x, \alpha)_{\{0:19\}}$}, above=14mm of B5]   {$\Sigma_{20}$};
  \node (F6) [draw=none, label=below:{$(\x, \alpha)_{\{10:19\}}$}, above=14mm of B6] {$\Sigma_{21}$};
  \node (F7) [draw=none, right=of F6]   {};
 
  \draw[->, shorten <=2pt, shorten >=2pt] (x) -- (B1);
  \draw[->, shorten <=0pt, shorten >=16pt] (x) -- (F1);

  \draw[-, gray, shorten <=1pt, shorten >=1pt] (B1) -- (D1);
  \draw[->, gray, shorten <=1pt, shorten >=1pt] (D1) -- (B2);
  \draw[-, gray, shorten <=1pt, shorten >=1pt] (B3) -- (D2);
  \draw[->, gray, shorten <=1pt, shorten >=1pt] (D2) -- (B4);
  \draw[-, gray, shorten <=1pt, shorten >=1pt] (B4) -- (D3);
  \draw[->, gray, shorten <=1pt, shorten >=1pt] (D3) -- (B5);
  
  \draw[->, gray, shorten <=5pt, shorten >=2pt] (B6) -- (B7);
  \draw[->, shorten <=5pt, shorten >=2pt] (F6) -- (F7);
  
  \draw[-, shorten <=1pt, shorten >=1pt] (F1) -- (DD1);
  \draw[->, shorten <=1pt, shorten >=1pt] (DD1) -- (F2);
  \draw[-, shorten <=1pt, shorten >=1pt] (F3) -- (DD2);
  \draw[->, shorten <=1pt, shorten >=1pt] (DD2) -- (F4);
  \draw[-, shorten <=1pt, shorten >=1pt] (F4) -- (DD3);
  \draw[->, shorten <=1pt, shorten >=1pt] (DD3) -- (F5);

  % switch arrows
  \draw[->, thick, shorten <=4pt, shorten >=18pt] (B2) -- (F3);
  \draw[->, thick, shorten <=4pt, shorten >=20pt] (B5) -- (F6);

\end{tikzpicture}
\caption{\textit{Example mass matrix adaptation scheme with background (below) and foreground (above) covariance estimators, with a switch/flush frequency of 10 draws. $\Sigma_i$ represents the covariance estimate used to sample draw $n+1$. Labels beneath them indicate the indices of the draw-gradient pairs on which that estimate is based. The transformation used for the Hamiltonian at each iteration is informed by the estimate stored in the foreground's state.}}
\label{fig:adapt_fig}
\end{figure}

\section[Details of the posteriordb evaluation]{\texttt{posteriordb} evaluation details}
\label{appendix:posteriordb}

Exceptionally slow-sampling models in \texttt{posteriordb} that we excluded from our analysis are\\\\
\texttt{ecdc0401-covid19imperial-v2},\\
\texttt{ecdc0401-covid19imperial-v3},\\
\texttt{ecdc0501-covid19imperial-v2},\\
\texttt{ecdc0501-covid19imperial-v3},\\
\texttt{election88-election88-full},\\
\texttt{hmm-gaussian-simulated-hmm-gaussian},\\
\texttt{iohmm-reg-simulated-iohmm-reg},\\
\texttt{mnist-nn-rbm1bJ100},\\
\texttt{mnist-100-nn-rbm1bJ10},\\
\texttt{prideprejudice-chapter-ldaK5},\\
\texttt{prideprejudice-paragraph-ldaK5}\\\\
The models which did not converge for any sampler/configuration are:\\\\
\texttt{GLMM-Poisson-data-GLMM-Poisson-model,\\
eight-schools-eight-schools-centered,\\ 
gp-pois-regr-gp-pois-regr,\\
low-dim-gauss-mix-collapse-low-dim-gauss-mix-collapse,\\
mcycle-gp-accel-gp,\\ 
mcycle-splines-accel-splines,\\
normal-2-normal-mixture,\\ 
normal-5-normal-mixture-k,\\
one-comp-mm-elim-abs-one-comp-mm-elim-abs,\\
ovarian-logistic-regression-rhs,\\ 
pilots-pilots,\\
prostate-logistic-regression-rhs,\\
radon-mn-radon-hierarchical-intercept-centered,\\
radon-mn-radon-variable-intercept-slope-centered,\\
radon-mn-radon-variable-slope-centered,\\ 
soil-carbon-soil-incubation,\\
surgical-data-surgical-model,\\
synthetic-grid-RBF-kernels-kronecker-gp,\\
three-docs1200-ldaK2,\\
three-men3-ldaK2,\\
uk-drivers-state-space-stochastic-level-stochastic-seasonal}\\\\
The remaining models were\\\\
\texttt{GLMM-data-GLMM1-model, GLM-Binomial-data-GLM-Binomial-model, GLM-Poisson-Data-GLM-Poisson-model, M0-data-M0-model, Mb-data-Mb-model, Mh-data-Mh-model, Mt-data-Mt-model, Mtbh-data-Mtbh-model, Mth-data-Mth-model, Rate-1-data-Rate-1-model, Rate-2-data-Rate-2-model, Rate-3-data-Rate-3-model, Rate-4-data-Rate-4-model, Rate-5-data-Rate-5-model, Survey-data-Survey-model, arK-arK, arma-arma11, bball-drive-event-0-hmm-drive-0, bball-drive-event-1-hmm-drive-1, bones-data-bones-model, butterfly-multi-occupancy, diamonds-diamonds, dogs-dogs, dogs-dogs-hierarchical, dogs-dogs-nonhierarchical, dugongs-data-dugongs-model, earnings-earn-height, earnings-log10earn-height, earnings-logearn-height, earnings-logearn-height-male, earnings-logearn-interaction, earnings-logearn-interaction-z, earnings-logearn-logheight-male, eight-schools-eight-schools-noncentered, fims-Aus-Jpn-irt-2pl-latent-reg-irt, garch-garch11, gp-pois-regr-gp-regr, hmm-example-hmm-example, hudson-lynx-hare-lotka-volterra, irt-2pl-irt-2pl, kidiq-kidscore-interaction, kidiq-kidscore-momhs, kidiq-kidscore-momhsiq, kidiq-kidscore-momiq, kidiq-with-mom-work-kidscore-interaction-c, kidiq-with-mom-work-kidscore-interaction-c2, kidiq-with-mom-work-kidscore-interaction-z, kidiq-with-mom-work-kidscore-mom-work, kilpisjarvi-mod-kilpisjarvi, loss-curves-losscurve-sislob, low-dim-gauss-mix-low-dim-gauss-mix, lsat-data-lsat-model, mesquite-logmesquite, mesquite-logmesquite-logva, mesquite-logmesquite-logvas, mesquite-logmesquite-logvash, mesquite-logmesquite-logvolume, mesquite-mesquite, nes1972-nes, nes1976-nes, nes1980-nes, nes1984-nes, nes1988-nes, nes1992-nes, nes1996-nes, nes2000-nes, nes-logit-data-nes-logit-model, radon-all-radon-county-intercept, radon-all-radon-hierarchical-intercept-centered, radon-all-radon-hierarchical-intercept-noncentered, radon-all-radon-partially-pooled-centered, radon-all-radon-partially-pooled-noncentered, radon-all-radon-pooled, radon-all-radon-variable-intercept-centered, radon-all-radon-variable-intercept-noncentered, radon-all-radon-variable-intercept-slope-centered, radon-all-radon-variable-intercept-slope-noncentered, radon-all-radon-variable-slope-centered, radon-all-radon-variable-slope-noncentered, radon-mn-radon-county-intercept, radon-mn-radon-hierarchical-intercept-noncentered, radon-mn-radon-partially-pooled-centered, radon-mn-radon-partially-pooled-noncentered, radon-mn-radon-pooled, radon-mn-radon-variable-intercept-centered, radon-mn-radon-variable-intercept-noncentered, radon-mn-radon-variable-intercept-slope-noncentered, radon-mn-radon-variable-slope-noncentered, radon-mod-radon-county, rats-data-rats-model, rstan-downloads-prophet, sat-hier-2pl, sblrc-blr, sblri-blr, science-irt-grsm-latent-reg-irt, seeds-data-seeds-centered-model, seeds-data-seeds-model, seeds-data-seeds-stanified-model, sesame-data-sesame-one-pred-a, sir-sir, state-wide-presidential-votes-hierarchical-gp, three-men1-ldaK2, three-men2-ldaK2, timssAusTwn-irt-gpcm-latent-reg-irt, traffic-accident-nyc-bym2-offset-only, wells-data-wells-daae-c-model, wells-data-wells-dae-c-model, wells-data-wells-dae-inter-model, wells-data-wells-dae-model, wells-data-wells-dist, wells-data-wells-dist100-model, wells-data-wells-dist100ars-model, wells-data-wells-interaction-c-model, wells-data-wells-interaction-model}\\\\
Details and code for these models can be found \href{https://github.com/stan-dev/posteriordb/blob/master/posterior_database/models/info}{here}. Some other interesting results from the \texttt{posteriordb} evaluation are found in \Cref{fig:supplemental_ecdfs}. 

\begin{figure}[H]
    \centering
    \includegraphics[width=0.95\textwidth]{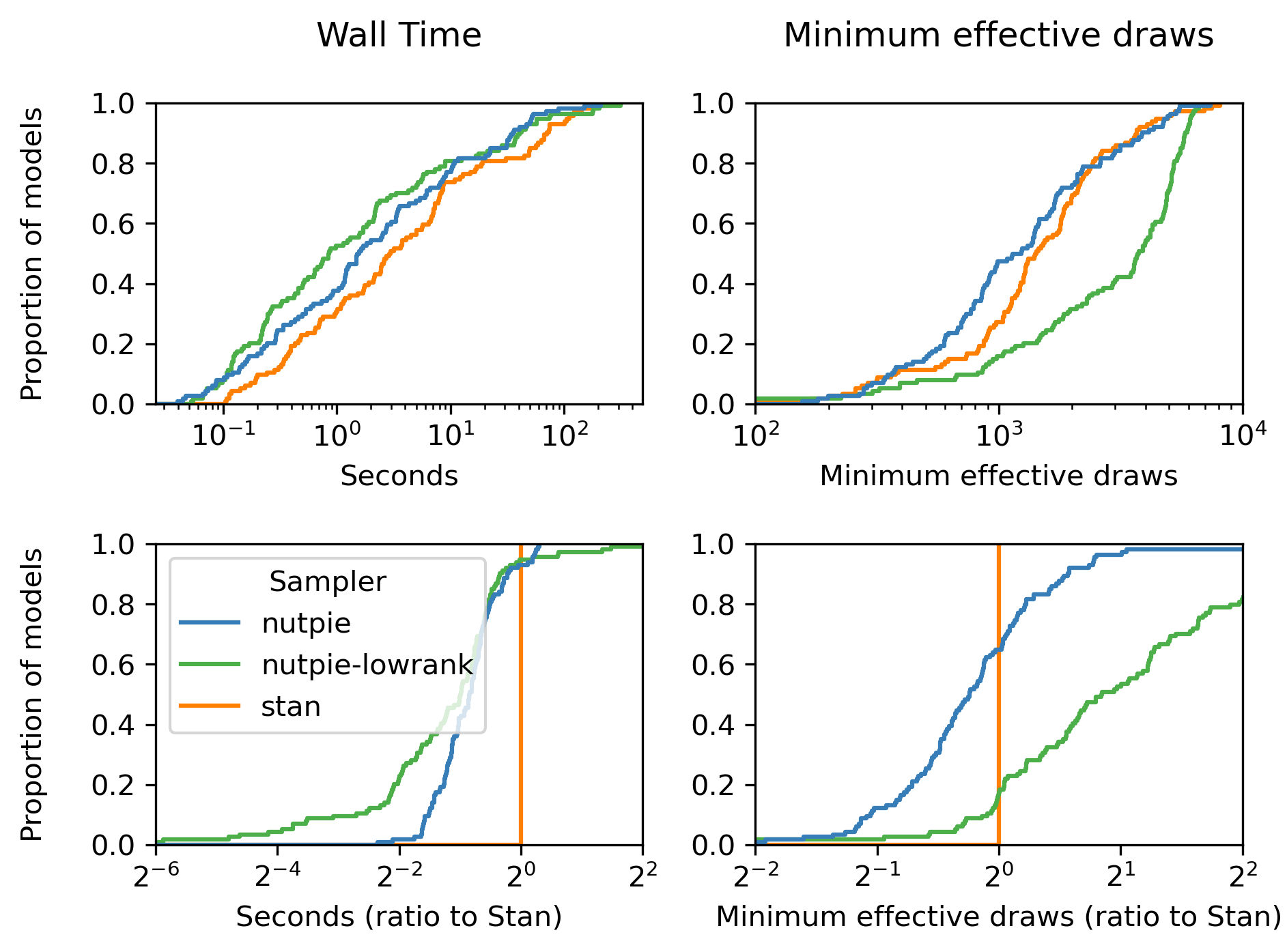}
    \caption{\textit{Cumulative distribution plots for target densities in \texttt{posteriordb}.  Each line represents a different sampler, \texttt{nutpie} (blue), \texttt{Stan} (orange), and \texttt{nutpie} low-rank (green) run for 1000 warmup iterations and 1000 posterior draws. The top row of plots show the raw diagnostic, while the bottom show the ratio of the diagnostic to Stan's. In the bottom left plot, the points to the right of 1 represent a small fraction of models for which Stan's default outperforms the other options in wall time. Interestingly, for a sizable chunk models, \texttt{nutpie} lags behind Stan in minimum effective draws. However, this is more than made up for in the reduction in the number of required gradient evaluations.}}
    \label{fig:supplemental_ecdfs}
\end{figure}

\begin{figure}[H]
    \centering
    \includegraphics[width=0.95\textwidth]{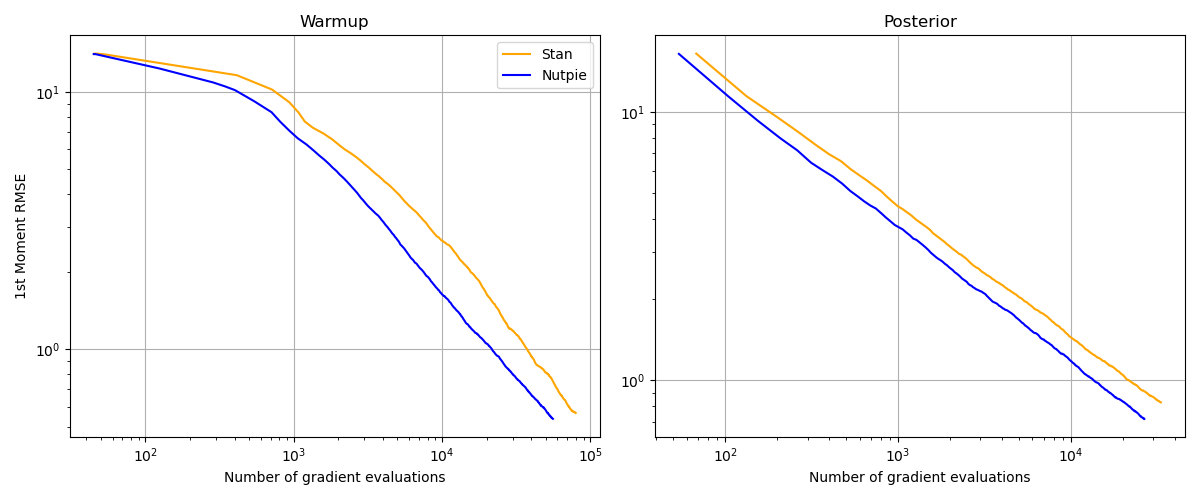}
    \caption{\textit{Average cumulative root mean squared error (RMSE) by gradient evaluations across 100 chains run for 1000 warmup draws and 500 posterior draws from a 100-D multivariate normal with a randomly generated $\Sigma = D^{1/2}U\diag{\lambda^2}U^TD^{1/2}$, $U\sim \text{Uniform}(O(200))$, $\lambda_i\sim \textrm{exp}(2)$, $D_i\sim \textrm{\emph{lognormal}}({0, 1})$. This procedure for generating $\Sigma$ produced an eigenspectrum with a maximum eigenvalue around 20 and most fewer than 1, leading to a high condition number $\kappa'$ of 68. Diagonal \texttt{nutpie} (blue) converged faster on average than \texttt{Stan} (red) in both warmup and stationary regimes, achieving a lower 1st moment RMSE in warmup with about 25,000 fewer gradient evaluations.}}
    \label{fig:rmse_comparisons}
\end{figure}

\begin{figure}[H]
    \centering
    \includegraphics[width=0.95\textwidth]{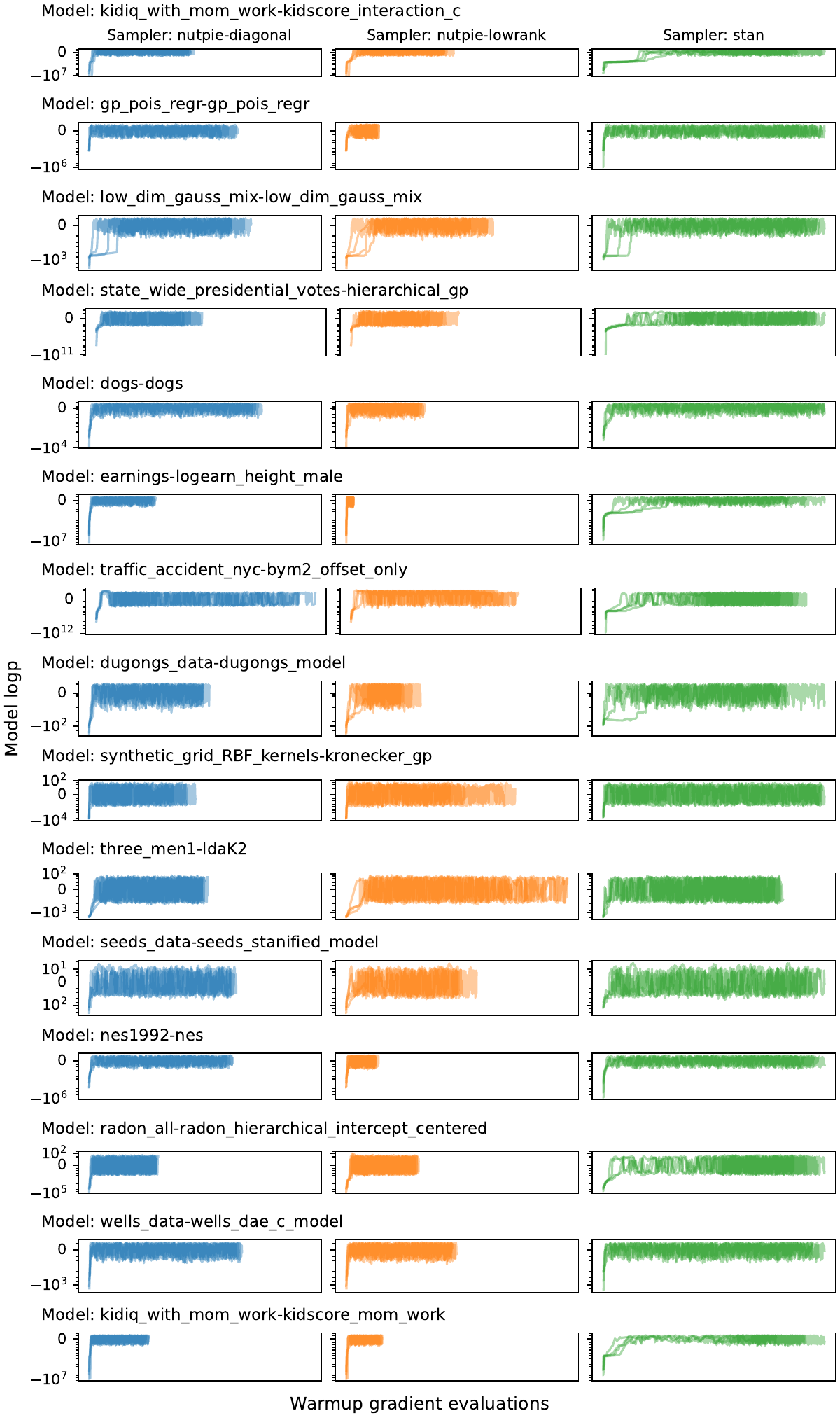}
    \caption{\textit{Trace-plots like \cref{fig:warmup-trace} of 15 randomly selected models from \texttt{posteriordb}.}}
    \label{fig:warmup-trace-random}
\end{figure}

\end{document}